\def\rset{\mathbb R}
\def\zset{\mathbb Z}
\def\eqsp{\;}
\newcommand{\pscal}[2]{\left\langle#1,#2\right\rangle}
\newcommand{\eqdef}{\ensuremath{\stackrel{\mathrm{def}}{=}}}
\def\Xset{\mathcal{Z}} 
\def\Zset{\mathcal{Z}} 
\def\F{\mathcal{F}} 
\def\cB{\mathsf{B}} 
\def\e{\mathcal{E}}
\def\q{\mathsf{q}}
\def\A{\mathcal{A}}
\newcommandx\sequence[3][2=t,3=\zset]
\def\PP{\mathbb{P}} 
\newcommand{\CPP}[3][]
{\ifthenelse{\equal{#1}{}}{{\mathbb P}\left(\left. #2 \, \right| #3 \right)}{{\mathbb P}_{#1}\left(\left. #2 \, \right | #3 \right)}}
\def\PE{\mathbb{E}} 
\newcommand{\CPE}[3][]
{\ifthenelse{\equal{#1}{}}{{\mathbb E}\left[\left. #2 \, \right| #3 \right]}{{\mathbb E}_{#1}\left[\left. #2 \, \right | #3 \right]}}
\def\Cset{\mathcal{C}} 
\theoremstyle{plain}
\newtheorem{theorem}{Theorem}
\newtheorem{assumption}{H\hspace{-3pt}}
\newaliascnt{proposition}{theorem}
\newaliascnt{lemma}{theorem}
\newtheorem{lemma}[lemma]{Lemma}
\newaliascnt{corollary}{theorem}
\theoremstyle{definition}
\newaliascnt{definition}{theorem}
\newaliascnt{remark}{theorem}
\newtheorem{remark}[remark]{Remark}
\newaliascnt{example}{theorem}
\def\rmd{\mathrm{d}}
\def\1{\mathbbm{1}}
\begin{document}

\title[Linear regression models with instrumental variables]{Bayesian variable selection in linear regression models with instrumental variables}\thanks{This work is partially supported by the NSF grant DMS 1513040}

\author{Gautam Sabnis}\thanks{ G. Sabnis: Boston University, 111 Cummington Mall, Boston,
  02215, MA, United States. {\em E-mail address:} gsabnis@bu.edu}
\author{Yves Atchad\'e}\thanks{ A. Y. Atchad\'e: Boston University, 111 Cummington Mall, Boston,
  02215, MA, United States. {\em E-mail address:} atchade@bu.edu}
\author{Prosper Dovonon}\thanks{O. Dovonon: Concordia University, 1455 de Maisonneuve Blvd. West
Montreal, Quebec, H3G 1M8, Canada. {\em E-mail address:} prosper.dovonon@concodrdia.edu}

\subjclass[2010]{62F15, 62Jxx}

\keywords{High-dimensional Bayesian inference, Endogeneity, Variable selection, Posterior contraction, Markov Chain Monte Carlo, linear regression}

\maketitle


\begin{abstract}
Many papers on high-dimensional statistics have proposed methods for variable selection and inference in linear regression models by relying explicitly or implicitly on the assumption that all regressors are exogenous. However, applications abound where endogeneity arises from selection biases, omitted variables, measurement errors, unmeasured confounding and many other challenges common to data collection \citep{fan2014challenges}. The most common cure to endogeneity issues consists in resorting to instrumental variable (IV) inference. The objective of this paper is to present a Bayesian approach to tackling endogeneity in high-dimensional linear IV models. Using a working quasi-likelihood combined with an appropriate sparsity inducing spike-and-slab prior distribution, we develop a semi-parametric method for variable selection in high-dimensional linear models with endogeneous regressors within a quasi-Bayesian framework. We derive some conditions under which the quasi-posterior distribution is well defined and puts most of its probability mass around the true value of the parameter as $p \rightarrow \infty$. We demonstrate through empirical work the fine performance of the proposed approach relative to some other alternatives. We also include include an empirical application that assesses the return on education by revisiting the work of \citet{angrist1991does}.  
\end{abstract}

\section{Introduction}
The linear regression model has imposed itself as a benchmark for assessing the relationship between a response variable of interest and a set of covariates, or regressors. A critical issue in regression models is that of endogeneity, that is when a subset of regressors is correlated with the regression model error.
 Basically endogenous variables are those influenced by some of the same forces that influence the response variable. For example, economists examining the effects of education on earnings have long been concerned with the endogeneity of education \citep{angrist1991does}. ``Ability'' is often cited as one factor possibly correlated with earnings (those with higher ability earn more) and education (those with higher ability obtain more education). Endogeneity also arises from measurement errors in the explanatory variables. It is well-known in regressions with small set of regressors that endogeneity causes standard estimators such as the ordinary least squares estimator to be inconsistent.


The most common cure to endogeneity issues consists in resorting to instrumental variable (IV) inference. Consistent estimation is commonly obtained by relying on the so-called valid  instrumental variables (IV); i.e. variables uncorrelated with the regression error but correlated with the endogenous regressors. This gives rise to the IV model:
 \begin{equation*}
\PE[w_{ik}(y_i-x_i^\prime \theta)] = 0,\quad k = 1, \ldots, q, \quad i = 1, \ldots, n
\end{equation*}
where $y_i\in \rset$ is the response variable, $x_i\in \rset^p$ is the vector of explanatory variables, $w_i \in \rset^q$ the vector of instruments, $\theta\in \rset^p$ the vector of parameters, and $n$ is the sample size. A good account of the IV methodology in low-dimensional problems can be found in \citet{angrist1991does,hansen1982large}, and the references therein. 

In this paper we consider high-dimensional linear regression models where the number of regressors $p$ is potentially larger than the sample size $n$. This set up is not immune to endogeneity. In fact, beside the usual sources mentioned above,  in some settings endogeneity can  arise incidentally from a large number of regressors (see e.g. \citet{fan2014endogeneity}). 
 Recent work related to high-dimensional inference on linear IV models include Belloni et al. (2012), \citet{gautier2014high, fan2014endogeneity, belloni2017simultaneous}.
 Belloni et al. (2012) propose a two-step lasso/post-lasso approach for instrument selection and inference in linear IV models where the number of explanatory variables ($p$) is fixed but the number of instrumental variables ($q$) is large. \citet{gautier2014high} consider $p$ large and possibly $q$ large and propose the so-called Self-tuning IV estimator and non-asymptotic confidence intervals based on the Dantzig selection of \citet{candes2007dantzig}. \citet{belloni2017simultaneous} consider $p$ and $q$ large and propose estimators and confidence regions that are honest and asymptotically correct by relying on a two-step procedure using suitably orthogonalized instruments. \cite{fan2014endogeneity} follows the generalized method of moments (GMM) approach, as introduced by \citet{hansen1982large}. However, when $q\geq n$, the GMM objective function is too noisy an estimator of its population version. This has led \cite{fan2014endogeneity} to propose the focused GMM (FGMM), which  minimizes a GMM criterion that ignores the non-selected regressors. 

This paper relies on GMM settings and proposes a Bayesian method for variable selection and inference in high-dimensional IV models. One of the key advantages of the Bayesian framework is the ability to easily perform inference on the parameters of the model, and incorporate existing prior information in the analysis. By only restricting the moments of the data, IV models obviate the need to assume an underlying data distribution (or complete specification of a likelihood function), and allow inferences about the parameter of interest based only on the partial information supplied by a set of moment conditions. One interesting development in the Bayesian literature over the past few years is the quasi-Bayesian framework, which allows the development of Bayesian procedures without a complete specification of a likelihood function \citep{chernozhukov2003mcmc, liao2011posterior,kato2013quasi,atchade2017contraction} and makes it possible to effectively develop semi-parametric models, and moment equation models.

 The main contributions of this paper are threefold. First, using a working quasi-likelihood combined with an appropriate sparsity inducing spike-and-slab prior distribution \citep{mitchell1988bayesian, george1997approaches}, we develop a semi-parametric method for variable selection in high-dimensional linear models with endogeneous regressors within a quasi-Bayesian framework. Second, we study the statistical properties of the quasi-posterior distribution, $\check\Pi_\gamma$ (defined later in \ref{post:dist}), as the dimension $p$ increases. Under some minimal assumptions, we show (see Theorem \ref{thm1}) that $\check\Pi_\gamma$  puts most of its probability mass around the true value of the parameter as $p \rightarrow \infty$. Third, we develop a practical and efficient Markov Chain Monte Carlo algorithm to sample from $\check\Pi_\gamma$.  To the best of our knowledge, ours is the first paper to study in detail the Bayesian approach to tackling endogeneity in high-dimensional linear IV models. The performance of the Bayesian IV methods is highlighted by Monte Carlo simulations. The paper also includes an empirical application that assesses the return on education using US data by revisiting the work of \citet{angrist1991does}.

 The rest of the paper is organized as follows. The model and the Bayesian method proposed are presented in Section 2. This section also presents our main results establishing the consistency of the selection method proposed. The MCMC sampling algorithm is introduced in Section 3 which also contains our simulation results. Section 4 contains the empirical application  and concluding remarks are included in Section 5.

\subsection{Notation} For an integer $d \ge 1$, we equip the Euclidean space $\rset^d$ with its usual Euclidean inner product $\pscal {\cdot}{\cdot}$, associated norm $\| \cdot \|_2$, and its Borel sigma-algebra. We set $\Delta_d\eqdef\{0,1\}^d$. We will also use the following norms on $\rset^d$: $\|\theta\|_1 \eqdef \sum\limits_{j=1}^{d} \lvert \theta_j \rvert$, $\|\theta\|_0 \eqdef \sum\limits_{j=1}^{d} {\bm{1}}_\{\lvert \theta_j \rvert > 0\}$ and $\|\theta\|_\infty \eqdef \underset{1 \le j \le d}{\mbox{max}} \lvert \theta_j \rvert$. 

For $\delta \in \Delta_d$, we set $\delta^c \eqdef 1 - \delta$, that is $\delta_j^c \eqdef 1 - \delta_j, 1 \le j \le d$. For $\theta in \rset^d$, the sparsity structure of $\theta$ is the element $\delta \in \Delta_d$ defined as $\delta_j =  {\bm{1}}_\{\lvert{\theta_j}\rvert > 0\}, 1 \le j \le d$. 

Throughout the paper $e$ denotes the Euler number and $[d]$ represents $1, \ldots, d$.

\section{Model and Method}
Suppose that we have $n$ independent subjects, and observe on subject $i$ the random vector $(y_i,x_i,w_i)\in\rset\times \rset^p\times \rset^q$. We postulate the following model: for some $\theta\in\rset^p$,
\begin{equation}\label{reg:mod}
y_i = \pscal{x_i}{\theta} + \epsilon_i,\end{equation}
for some zero-mean (un-observable) real-valued random variable $\epsilon_i$. The regression parameter $\theta\in\rset^p$ is the quantity of interest. We consider the setting where $p\geq n$. This problem has attracted an impressive literature over the last two decades, and it is now well-known that the regression parameter $\theta_\star$ can be recovered if it is sparse -- or close to be sparse -- under appropriate assumptions on the regression matrix (see e.g. \cite{buhlGeer11,hastie:etal:15} and the references therein). 
In this work we consider the setting where some of the components of the regressor $x_i$ are endogeneous, in the sense that there are correlated with the error $\epsilon_i$, so that $\PE(\epsilon_ix_i)\neq 0$. As documented in the introduction, this issue is very common in applications, and it is well-known that standard inferential procedures that ignore endogeneity are inconsistent in general.  A well-established approach to mitigate endogeneity is to use instrumental variables.  This is the approach taken here, and the set of instruments at our disposal is $w_i\in\rset^q$. More precisely we make the following data-ganerating assumption.

\begin{assumption}\label{H1}
$\{(y_i,x_i,w_i,\epsilon_i),\;1\leq i \leq n\}$ are $n$ independent and identically distributed  random vectors, where $(y_i,x_i,w_i,\epsilon_i)\in\rset\times\rset^p\times\rset^q\times\rset$, and there exists $\theta_\star$ such that $y_i=\pscal{x_i}{\theta_\star}+\epsilon_i$, for all $i=1,\ldots,n$.
Furthermore we assume that $\epsilon \eqdef (\epsilon_1,\ldots,\epsilon_n)$ is conditionally sub-Gaussian in the sense that there exists $\sigma_0>0$ such that for all $u\in\rset^n$, 
\begin{equation}\label{sub:gaussian:eq}
\PE(\epsilon\vert w) = 0,\;\;\mbox{ and }\;\; \PE\left(e^{\pscal{u}{\epsilon}}\vert w\right)\leq e^{\frac{\sigma_0^2\|u\|_2^2}{2}},
\end{equation}
almost surely, where $w \eqdef(w_1,\ldots,w_n)$.
\end{assumption}

\medskip
Although not explicitly stated in H\ref{H1}, it is expected that the intruments $w_i$ are correlated to the endogeneous components of $x_i$, and this correlation together with (\ref{sub:gaussian:eq})  are leveraged to derive better behaved inference. This is classically done via the GMM estimator  that minimizes 
\[\left(y-X\theta\right)'WDW'\left(y-X\theta\right),\]
or penalized versions thereof, where $y=(y_1,\ldots,y_n)'\in\rset^n$, $X\in\rset^{n\times p}$ has rows $x_i'$, $W\in\rset^{n\times q}$ has rows $w_i'$, and $D\in\rset^{q\times q}$ symmetric positive definite weight matrix. However in a context where $q$ and $p$ are potentially larger than $n$, this approach of using all the intruments may not work, because the GMM functional could be too noisy estimate of its  population version. To circumvent this problem \cite{fan2014endogeneity} proposed the idea of  focused GMM that incorporates a moment selection step: only instruments associated to selected regression parameters are included in the model. 
Note here that the idea of moment selection differs from previous works on moments selection (as in for instance \cite{}) which deal with the question of how to retain only valid moment conditions. In our case, all the moments conditions are assumed valid, but we face the challenge of having too many of them, given the available sample size.  The purpose of this work is to develop a Bayesian version of focused GMM.

Let $\Delta\eqdef \{0,1\}^p$, $\Zset \eqdef\rset^n\times\rset^{n\times p}\times \rset^{n\times q}$.  For $\delta\in\Delta$, $z=(y,X,W)\in\Zset$, we define
\[q_{\delta,\theta}(z) \eqdef \exp\left[-\frac{1}{2}\left(y-X\theta\right)'W \Lambda_{\delta}W'\left(y-X\theta\right)\right],\]
for some diagonal matrix $\Lambda_\delta\in\rset^{q\times q}$ with nonnegative diagonal elements.  We make the following assumption on the prior distribution of $(\delta,\theta)$.
\begin{assumption}\label{H2}
For $\bar s\geq \|\theta_\star\|_0$, and some absolute constant $u>0$, 
\[\omega_\delta \propto \mathsf{q}^{\|\delta\|_0}(1-\mathsf{q})^{p-\delta}\textbf{1}_{\Delta_{\bar s}}(\delta),\]
where $\mathsf{q} =\frac{1}{p^{u+1}}$, and $\Delta_{\bar s} \eqdef\{\delta\in\Delta:\;\|\delta\|_0\leq \bar s\}$.
Furthermore,  given $\delta$ the components of $\theta$ are independent and
\[ \theta_j\vert \delta \sim \left\{\begin{array}{ll} \textbf{N}\left(0,\frac{1}{\rho^2}\right),\;\; \mbox{ if }\;\; \delta_j=1\\ \textbf{N}\left(0,\gamma\right),\;\; \mbox{ if }\;\; \delta_j=0\end{array}\right.\]
 for constants $\rho>0,\gamma>0$ that we specify later in Theorem \ref{thm1}.
\end{assumption}

\begin{remark}
Discrete priors distributions that put independent Bernoulli distribution on each $\delta_j$ are common in Bayesian variable selection problems (\cite{ george1997approaches}). Note here however that the probability parameter $\textsf{q}$ depends on the dimension $p$. As shown in (\cite{castillo:etal:12}), this feature is key to achieve posterior consistency as $p$ diverges.

Since our objective at the onset is to fit a sparse model, the idea of imposing a hard constrain $\|\delta\|_0\leq \bar s$ on the sparsity level seems reasonable, and has been explored by others (see for instance \cite{banerjee:ghosal13}). The parameter $\bar s$ needs not be a good estimate of $s_\star$, but rather an upper bound derived for instance from prior information or from limitation imposed by the available sample size. 
\end{remark}

Let  $B_\delta\in\rset^{p\times p}$ be the diagonal matrix such that $B_{\delta,jj}=\frac{1}{\rho^2}$ if $\delta_j=1$, and $B_{\delta,jj}=\gamma$ if $\delta_{jj}=0$. Under assumptions H\ref{H1} and H\ref{H2}, the posterior distribution of $(\delta,\theta)$ can be written as 
\begin{equation}\label{post:dist}
\check\Pi(\delta,\rmd\theta\vert z)\propto \omega_\delta q_{\delta,\theta_\delta}(z)\frac{e^{-\frac{1}{2}\theta' B_\delta^{-1}\theta}}{\sqrt{\det(2\pi B_\delta)}}\rmd \theta,\end{equation}
that we view as a random probability measure on $\Delta\times \rset^p$, and we derive in Theorem \ref{thm1} some simple conditions under which $\check\Pi(\cdot\vert z)$ put most of its probability mass around $(\delta_\star,\theta_\star)$, where $\delta_\star$ denotes the sparsity structure of $\theta_\star$, that is $\delta_{\star j} = \textbf{1}(|\theta_{\star j}|>0)$.

Without any loss of generality we will assume that 
\begin{equation}\label{eq:inst:norm}
\|W_j\|_2=1, \;\;\;1\leq j\leq q.\end{equation}
where $W_j$ denotes the $j$-th column of $W$, and we assume that the matrix $\Lambda_\delta$ takes the form
\begin{equation}\Lambda_\delta \eqdef \frac{1}{\lambda} \left(\begin{array}{ccc} (T_\delta)_1 & & \\& \ddots & \\ & & (T_\delta)_q \end{array}\right)\in\rset^{q\times q} 
\end{equation}
for some constant $\lambda>0$, where $T_\delta\eqdef ((T_\delta)_1,\ldots,(T_\delta)_q)\in\{0,1\}^q$, and $(T_\delta)_j=1$ if the $j$-th instrument is included with model $\delta$, $(T_\delta)_j=0$ otherwise. We will write  $A_j$ to denote the $j$-th column of the matrix $A$. And in the same vein, since $T_\delta\in\{0,1\}^q$, we will write $W_{\delta}$ to denote the submatrix of $W$ obtained by keeping only the columns of $W$ for which the corresponding component of  $T_\delta$ is $1$. Under the prior distribution assumption H\ref{H2}, the maximum number of instruments used in any given model is
\begin{equation}\label{def:bart}
\bar t \eqdef \max_{\delta\in\Delta_{\bar s}} \|T_\delta\|_0
\end{equation}
which is expected to be of the same order as $\bar s$, the maximum number of active regressors allowed under prior H\ref{H1}. The matrix 
\[ M_\delta\eqdef (W_{\delta})'X\in\rset^{\|T_\delta\|_0\times p},\]
plays an important role in the analysis.  Its restricted eigenvalues are defined as follows. For  $\delta\in\Delta$ we define
\[\bar v(\delta) \eqdef\sup\left\{ \frac{u'(M_\delta'M_\delta)u}{n\|u\|_2^2},\;u\neq 0, u\in\rset^p_\delta\right\},\]
and
\[\underline{v}(\delta) \eqdef\inf\left\{ \frac{u'(M_\delta'M_\delta)u}{n\|u\|_2^2},\;u\neq 0,u\in\rset^p_\delta \right\}.\]
Note that these quantities depend on the random variable $z$. 

\begin{theorem}\label{thm1}
Assume H\ref{H1}-H\ref{H2}. Choose constants $\bar\kappa_1$, $\bar \kappa\geq 0$, $\underline{\kappa}>0$, and set 
\begin{multline}
\e \eqdef\left\{(y,X,W)\in\Xset:\; \max_{\1\leq k\leq q }\left|\pscal{W_k}{y-X\theta_\star}\right|\leq \sigma_0\sqrt{2\log(pq)},\;\;\bar v(\delta_\star)\leq \bar\kappa\right.\\
\left.\;\;\inf_{\delta\in\Delta_{\bar s}}\underline{v}(\delta)\geq \underline{\kappa} \;\;\;\mbox{ and }\;\;  \max_{\delta\in\Delta_{\bar s}}\max_{1\leq j\leq p} \frac{1}{\sqrt{n}}\left\|W_{\delta}'X_j\right\|_2\leq \bar\kappa_1\right\}.\end{multline}
Choose $\gamma>0$, $\rho\geq 1$ such that $\rho^2\|\theta_\star\|_\infty\leq \bar \rho$, where
\begin{equation}\label{eq:rho}
\bar\rho \eqdef 2\sigma_0\frac{\kappa_1}{\lambda}\sqrt{2n\bar t\log(pq)}.
\end{equation}
Set
\begin{equation}
\epsilon \eqdef 2\sqrt{2}\sigma_0 \frac{\bar\kappa_1}{\underline{\kappa}} \sqrt{\frac{(\bar s + s_\star)\bar t\log(pq)}{n}},
\end{equation}
and for absolute constants $m>1$, $M>\max(u,128)$, set
\[\cB_{m,M}\eqdef \bigcup_{\delta\in\Delta_{\bar s}}\; \{\delta\}\times \left\{\theta\in\rset^p:\; \|\theta_\delta-\theta_\star\|_2 \leq M\epsilon,\;\|\theta-\theta_\delta\|_2\leq m\sqrt{\gamma p}\right\}.\]
Then for all $p$ large enough,  we have
\begin{equation}\label{eq:main:bound}
1 - \PE_\star\left[\check\Pi\left(\cB_{m,M}\vert z\right)\right] \leq  \PP_\star(z\notin \e)  + \frac{1 + (pq)^{\frac{\sigma_0^2\bar t}{\lambda}}}{p^{M^2(1+s_\star)}} + 2e^{-\left(\frac{m-1}{2}\right)p}.
\end{equation}
\end{theorem}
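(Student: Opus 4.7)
The plan is to adapt the standard posterior-contraction template to the quasi-Bayesian focused GMM setting, cleanly separating the control of the ``inactive'' block $\theta_{\delta^c}$ from that of the ``active'' block $\theta_\delta$. I first condition on the event $\e$, which directly contributes the $\PP_\star(z\notin\e)$ term. On $\e$ all the key deterministic quantities (namely $\max_k|\pscal{W_k}{\epsilon}|$, the restricted eigenvalues $\bar v(\delta_\star)$ and $\inf_{\delta}\underline{v}(\delta)$, and the column norms $\|W_\delta'X_j\|_2$) are simultaneously controlled. I then decompose $\cB_{m,M}^c=A_1\cup A_2$, where $A_1=\{(\delta,\theta):\|\theta-\theta_\delta\|_2>m\sqrt{\gamma p}\}$ and $A_2=\{(\delta,\theta):\|\theta_\delta-\theta_\star\|_2>M\epsilon\}$, and bound each piece separately.

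For $A_1$, the crucial structural remark is that $q_{\delta,\theta_\delta}(z)$ depends on $\theta$ only through $\theta_\delta$, so conditional on $\delta$ the inactive block $\theta_{\delta^c}$ has quasi-posterior exactly equal to its $\mathcal{N}(0,\gamma I_{p-\|\delta\|_0})$ prior. Hence $\|\theta-\theta_\delta\|_2^2/\gamma$ is chi-squared with at most $p$ degrees of freedom, and a standard Chernoff/Laurent--Massart bound on $\chi^2_p$ tails (for any $m>1$) yields the $2e^{-(m-1)p/2}$ term uniformly in $\delta$, without even requiring the event $\e$.

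The core of the argument is the $A_2$ bound, obtained by writing $\check\Pi(A_2\vert z)=N/D$ and bounding $N$ and $D$ separately. For the denominator $D$, I would restrict the integral to $\delta=\delta_\star$ and a small ball around $\theta_\star$: the quasi-likelihood at $\theta=\theta_\star$ equals $\exp(-\|W_{\delta_\star}'\epsilon\|_2^2/(2\lambda))$, which on $\e$ is at least $(pq)^{-\sigma_0^2\bar t/\lambda}$, and the upper eigenvalue $\bar v(\delta_\star)\le\bar\kappa$ together with the condition $\rho^2\|\theta_\star\|_\infty\le\bar\rho$ supply a matching Gaussian prior density and volume factor over that ball. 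For the numerator $N$, after integrating out the inactive coordinates, the conditional density of $\theta_\delta$ is Gaussian with precision matrix $\rho^2 I+\frac{1}{\lambda}(M_\delta'M_\delta)_{\delta,\delta}$, whose smallest eigenvalue is at least $\underline\kappa n/\lambda$ by the restricted-eigenvalue assumption. Linearizing the quadratic form at $\theta_\star$ and combining $\max_k|\pscal{W_k}{\epsilon}|\le\sigma_0\sqrt{2\log(pq)}$ with $\max_{\delta,j}\|W_\delta'X_j\|_2\le\bar\kappa_1\sqrt n$ shows that the posterior mean $\hat\theta_\delta$ satisfies $\|\hat\theta_\delta-\theta_\star\|_2\lesssim \bar\kappa_1\underline\kappa^{-1}\sqrt{(\bar s+s_\star)\bar t\log(pq)/n}$, matching the scale $\epsilon$. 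A Gaussian tail estimate at radius $M\epsilon$ then produces decay of order $\exp(-cM^2(\bar s+s_\star)\log(pq))$, and summing over $\delta\in\Delta_{\bar s}$ (at most $p^{\bar s}$ models, absorbed by $\omega_\delta\le \mathsf{q}^{\|\delta\|_0}=p^{-(u+1)\|\delta\|_0}$ and the requirement $M>\max(u,128)$) combined with the lower bound on $D$ yields the middle term $(1+(pq)^{\sigma_0^2\bar t/\lambda})/p^{M^2(1+s_\star)}$.

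The hardest step is bounding $N$ uniformly over $\delta\ne\delta_\star$, because the competing models fall into two qualitatively different regimes. Models that omit some true variables produce a genuine bias in the quasi-likelihood that must be translated into an exponential penalty via the restricted eigenvalue $\underline\kappa$ applied to $X(\theta-\theta_\star)$ projected onto the selected instruments; models that include spurious variables carry no bias but introduce extra integration dimensions that have to be absorbed by the prior penalty $\mathsf{q}/(1-\mathsf{q})\asymp p^{-(u+1)}$ together with the combinatorial factor $\binom{p}{\|\delta\|_0}\le p^{\|\delta\|_0}$. Balancing the bias penalty for small models, the prior penalty for large models, and the Gaussian tail at radius $M\epsilon$ is precisely what forces the chosen rate $\epsilon\asymp\sqrt{(\bar s+s_\star)\bar t\log(pq)/n}$ and the threshold $M>\max(u,128)$; the remainder of the argument is a careful but routine combination of these pieces.
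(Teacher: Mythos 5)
Your overall architecture matches the paper's: condition on $\e$ (contributing $\PP_\star(z\notin\e)$), split the complement of $\cB_{m,M}$ into the event that the inactive block $\theta-\theta_\delta$ is large and the event that the active block $\theta_\delta$ is far from $\theta_\star$, dispose of the first by observing that $\theta_{\delta^c}$ given $\delta$ is exactly its $\mathbf{N}(0,\gamma I)$ prior (this is the paper's $\bar\F_2$ term and your chi-squared bound is equivalent to its ``standard Gaussian deviation bound''), and control the second via a lower bound on the normalizing constant at $(\delta_\star,\theta_\star)$ together with the prior-odds computation $\sum_{\delta\in\Delta_{\bar s}}\omega_\delta/\omega_{\delta_\star}\leq 2p^{s_\star(1+u)}$. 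Where you genuinely diverge is in how the far-from-$\theta_\star$ mass is killed: the paper does not compute the conditional posterior of $\theta_\delta$ at all, but instead invokes a testing lemma (Lemma \ref{test}, imported from an external reference), splits $\{\|\theta_\delta-\theta_\star\|_2>M\epsilon\}$ into shells $\F_{1,j}$, and uses the type-I/type-II error bounds of the test together with Fubini. Your route exploits the conjugacy of the quasi-likelihood with the Gaussian slab to get an exact Gaussian conditional with precision $\rho^2 I+\tfrac{1}{\lambda}X_\delta'W_\delta W_\delta' X_\delta$ and then applies a Gaussian tail bound. For this particular model your route is arguably more elementary and self-contained (it avoids the black-box testing lemma); the paper's route is more robust, since the second-order expansion condition defining $\e_{\mathsf{t}}$ handles over- and under-fitted models in one stroke and generalizes beyond quadratic quasi-log-likelihoods.

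The one place where your sketch is materially thinner than the paper's argument is the under-fitted models. Your statement that the posterior mean satisfies $\|\hat\theta_\delta-\theta_\star\|_2\lesssim\epsilon$ is only valid for models whose bias is negligible (essentially $\delta\supseteq\delta_\star$, after accounting for the prior shrinkage controlled by $\rho^2\|\theta_\star\|_\infty\leq\bar\rho$); for a model omitting a strong signal the conditional Gaussian is centered far from $\theta_\star$ and the tail bound at radius $M\epsilon$ gives nothing, so the entire burden falls on showing that the marginal quasi-likelihood of such a $\delta$ is exponentially small relative to the denominator. You flag this correctly in your last paragraph, but note the additional wrinkle that $q_{\delta,\cdot}$ and $q_{\delta_\star,\cdot}$ use \emph{different instrument sets}, so they are not directly comparable; the paper pays for this with the factor $q_{\delta,\theta_\star}/q_{\delta_\star,\theta_\star}\leq\exp(\tfrac{1}{2\lambda}\epsilon'W_{T(\delta_\star)}W_{T(\delta_\star)}'\epsilon)\leq(pq)^{\sigma_0^2\bar t/\lambda}$ on $\e$, which is exactly the $(pq)^{\sigma_0^2\bar t/\lambda}$ appearing in \eqref{eq:main:bound}. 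You place the same factor in the denominator instead, which is fine, but any complete write-up of your route must make this comparison of moment sets explicit; also be aware that the restricted eigenvalue you need applies to $\theta_\delta-\theta_\star\in\rset^p_{\delta\cup\delta_\star}$, not merely to vectors supported on $\delta$.
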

\begin{proof}
See Section \ref{sec:proof:thm1}.
\end{proof}

In general $\check\Pi$ cannot achieve perfect model recovery since the non-zero components of $\theta_\star$ could be arbitrarily small, and hence easily missed. With $C$ and $\epsilon$ as above we define
\[J_\star\eqdef\left\{1\leq j\leq p:\; |\theta_{\star j}|>M\epsilon\right\}.\]
Set $\cB^{(\delta)} \eqdef \left\{\theta\in\rset^p:\; \|\theta_\delta-\theta_\star\|_2 \leq M\epsilon,\;\|\theta-\theta_\delta\|_2\leq m\sqrt{\gamma p}\right\}$. Then, clearly the set $\cB_{m,M}$ of Theorem \ref{thm1} can also be written as
\[\cB_{m,M} = \bigcup_{\delta\in\A} \{\delta\}\times \cB^{(\delta)},\;\;\mbox{ where } \A\eqdef\{\delta\in\Delta:\;\|\delta\|_0\leq \bar s,\;\mbox{ and }\delta_j=1 \mbox{ for all } j\in J_\star\}.\]
In other words, Theorem \ref{thm1} implies that $\check\Pi$ does not miss any of the large magnitude components of $\theta_\star$.

\begin{remark}
\begin{itemize}
\item The result implies that we should let $\lambda$ scale as 
\[\lambda = c \log(pq),\]
for some tuning constant $c$, and then choose $\rho$ as
\[\rho  = c\left(\frac{n}{\log(pq)}\right)^{1/4},\]
for some tuning parameter $c$. Finally the result suggests setting
\[\gamma =\frac{c}{p},\]
for a tuning parameter $c$.
\item The key parameter in the theorem is $\underline{\kappa}$ which depends both on the design matrix $X$ and on the strength of the instruments. For instance, suppose that we are in the situation where one of the instruments, say the first instrument, is weak in the sense that
\[\inf\{\|W_\delta'X_1\|_2,\;\;\delta\in\Delta_{\bar s} \;\;\textsf{s.t. }\; \delta_1=1\} \leq \alpha\bar\kappa_1.\]
In that case, if $e_1$ denotes the first unit vector of $\rset^p$, we have $e_1'M_\delta'M_\delta e_1 = \|W_\delta'X_1\|_2$. Hence
\[\frac{\inf_{\delta\in\Delta_{\bar s}}\underline{v}(\delta)}{\bar\kappa_1} \leq \alpha.\]
\end{itemize}
\end{remark}

\section{Markov Chain Monte Carlo computation and numerical experiments}\label{sec:num}
In this section we develop a practical Markov Chain Monte Carlo algorithm to sample from the posterior distribution $\check\Pi$, and we explore the behavior of $\check\Pi$ on two simulated data examples.

\subsection{A MCMC sampler for $\Pi$ \label{mcmc_sampler}}  
We begin with a description of the MCMC sampler. To sample $(\delta, \theta)$ we use a Metropolis-Hastings-within-Gibbs sampler, where we update $\delta$ given $\theta$, then we update the selected component $[\theta]_\delta$ given $(\delta, [\theta]_{\delta^c})$, and finally update $[\theta]_{\delta^c}$ given $(\delta, [\theta]_\delta)$. We refer the reader to \cite{tierney94, robertetcasella04} for introduction to basic MCMC algorithms. 

To update $\delta$, we follow a specific form of Metropolis-Hastings update analyzed in \citep{yang2016computational}. To develop the details we rewrite the posterior in \eqref{post:dist} as follows,
\begin{equation}
\check\Pi(\delta,\rmd\theta\vert z)\propto \omega_\delta e^{-\frac{1}{2} \sum\limits_{\ell = 1}^{q}\frac{{\tilde\delta}_\ell}{v_\ell} {\langle y - X\theta_\delta, w_\ell \rangle}^2} \frac{e^{-\frac{1}{2}\theta' B_\delta^{-1}\theta}}{\sqrt{\det(2\pi B_\delta)}}\rmd \theta,\end{equation}
where $\tilde\delta = T(\delta)$. We randomly select one of the following two schemes to update $\delta$, each with probability 0.5.

{\bf{\textit{Single flip update:}}} Choose an index $j \in [p]$ uniformly at random, and form the new state $\delta'$ by setting $\delta_j' = 1 - \delta_j$. We denote this by $\delta_{(j)} \rightarrow \delta'_{(j)}$. 

Move to the state $\delta'_{(j)}$ with probability Pr($\delta_{(j)}, \delta'_{(j)}$) where the acceptance ratio is given by 
\begin{equation*} 
\mbox{Pr}(\delta_{(j)},\delta'_{(j)}) = \min \bigg\{1, \frac{\Pi(\delta'_{(j)} \mid z)}{\Pi(\delta_{(j)} \mid z)}\bigg\}
\end{equation*} 

For a single flip on $i_0$ where $\delta'_{i_0} = 1$,
\begin{align*}
\frac{\Pi(\delta'_{(i_0)} \mid z)}{\Pi(\delta_{(i_0)} \mid z)} = 
\frac{q N(0, c/\rho)}{(1-q) N(0, \gamma)} \exp\bigg\{-\frac{1}{2} \sum\limits_{\ell \in {\tilde{i}}_0} \frac{1}{v_\ell}{\langle y - X\theta_{\delta}, w_{\ell}\rangle}^2 - \frac{1}{2} \theta_{i_0}^2 \sum\limits_{\ell=1}^{q} \frac{\tilde{\delta'}_\ell}{v_\ell} {\langle X_{i_0}, w_j \rangle}^2 \\ 
+ \theta_{i_0} \sum\limits_{\ell=1}^{q} \frac{\tilde{\delta'}_\ell}{v_\ell} {\langle y - X\theta_{\delta}, w_\ell \rangle} {\langle X_{i_0}, w_\ell \rangle}\bigg\}
\end{align*} 

where $\tilde\delta = (\delta, \delta)$, $\tilde{i} = (i, p+i)$ and $y - X\theta_\delta = y - X_{-i_0}\theta_{\delta_{-i_0}}$. 

For a single flip on $i_0$ where $\delta'_{i_0} = 0$,
\begin{align*}
\frac{\Pi(\delta'_{(i_0)} \mid z)}{\Pi(\delta_{(i_0)} \mid z)} = 
\frac{(1 - q) N(0, \gamma)}{q N(0, c/\rho)} \exp\bigg\{\frac{1}{2} \sum\limits_{\ell \in {\tilde{i}}_0} \frac{1}{v_\ell} {\langle y - X\theta_{\delta}, w_{\ell}\rangle}^2 + \frac{1}{2} \theta_{i_0}^2 \sum\limits_{\ell=1}^{q} \frac{\tilde{\delta}_\ell}{v_\ell} {\langle X_{i_0}, w_\ell \rangle}^2\\ - \theta_{i_0} \sum\limits_{\ell=1}^{q} \frac{\tilde{\delta}_\ell}{v_\ell} {\langle y - X\theta_{\delta}, w_\ell \rangle} {\langle X_{i_0}, w_\ell \rangle}\bigg\}
\end{align*}

where $y - X\theta_\delta = y - X_{-i_0}\theta_{\delta_{-i_0}}$.

{\bf{\textit{Double flip update:}}} Define the subsets $S(\delta) = \{j \in [p] \mid \delta_j = 1\}$ and let $S^{c}(\delta) = \{j \in [p] \mid \delta_j = 0\}$. Choose an index pair $(j_1, j_2) \in S(\delta) \times S^c(\delta)$ uniformly at random, and form the new state $\delta'$ by flipping $\delta_{j_1} = 1$ to $\delta'_{j_1} = 0$ and $\delta_{j_2} = 0$ to $\delta'_{j_2} = 1$. We denote this by $\delta_{(j_1,j_2)} \rightarrow \delta'_{(j_1,j_2)}$.  

Move to the state $\delta'_{(j_1,j_2)}$ with probability Pr($\delta_{(j_1,j_2)}$,$\delta'_{(j_1,j_2)}$) where the acceptance ratio is given by 
\begin{equation*} 
\mbox{Pr}(\delta_{(j_1,j_2)},\delta'_{(j_1,j_2)}) = \min \bigg\{1, \frac{\Pi(\delta'_{(j_1,j_2)} \mid z)}{\Pi(\delta_{(j_1,j_2)} \mid z)}\bigg\}
\end{equation*} 
For a double flip on $i_0$ and $i_1$ where $\delta'_{i_0} = 0$ and $\delta'_{i_1} = 1$, 

\begin{align*}
\frac{\Pi(\delta'_{(i_0,i_1)} \mid z)}{\Pi(\delta_{(i_0,i_1)} \mid z)} = 
\exp\bigg\{-\frac{1}{2} \theta_{i_1}^2 \sum\limits_{\ell \ne {\tilde{i}}_0} \frac{\tilde{\delta'}_\ell}{v_\ell} {\langle X_{i_1}, w_\ell \rangle}^2 + \frac{1}{2} \theta_{i_0}^2 \sum\limits_{\ell \ne {\tilde{i}}_1} \frac{\tilde{\delta}_\ell}{v_\ell} {\langle X_{i_0}, w_\ell \rangle}^2 \\ 
+ \theta_{i_1} \sum\limits_{\ell \ne {\tilde{i}}_0} \frac{\tilde{\delta'}_\ell}{v_\ell} {\langle y - X\theta_\delta, w_\ell \rangle}{\langle X_{i_1}, w_\ell \rangle} \\ - \theta_{i_0} \sum\limits_{\ell \ne {\tilde{i}}_1} \frac{\tilde{\delta}_\ell}{v_\ell} {\langle y - X\theta_\delta, w_\ell \rangle}{\langle X_{i_0}, w_\ell \rangle} - \frac{1}{2} \sum\limits_{\ell \in {\tilde{i}}_1} \frac{1}{v_\ell} {\langle y - X\theta_\delta, w_{\ell}\rangle}^2 + \frac{1}{2} \sum\limits_{\ell \in {\tilde{i}}_0} \frac{1}{v_\ell}{\langle y - X\theta_\delta, w_{\ell} \rangle}^2 \bigg\}
\end{align*} 

$y - X\theta_\delta = y - X_{-\{i_0,i_1\}}\theta_{\delta_{-\{i_0,i_1\}}}$.

The full conditionals of $\theta$ are standard distributions due to the use of Gaussian prior. We partition $\theta$ into $\theta = ([\theta]_{\delta}, [\theta]_{\delta^c})$, where $[\theta]_\delta$ groups the components of $\theta$ for which $\delta_j = 1$, and $[\theta]_{\delta^c}$ groups the remaining components. The conditional distributions of the two components are given by 
\begin{equation*} 
\check\Pi(\theta_\delta \mid \delta, z) \sim N\Bigg(\big(X'_\delta W \Lambda_\delta W' X_\delta + B_\delta^{-1} \big)^{-1} X'_\delta W \Lambda_\delta W' Y, \big(X'_\delta W \Lambda_\delta W' X_\delta + B_\delta^{-1}\big)^{-1}\Bigg) 
\end{equation*} 

\begin{equation*} 
\check\Pi(\theta_{\delta^c} \mid \delta^{c}, z) \sim N(0, B_{\delta^c}^{-1}) 
\end{equation*}

\subsection{Numerical Experiments}
In this section we investigate the performance of our proposed approach via numerical simulations, using the same set up as in \citet{fan2014endogeneity,belloni2017simultaneous}.
We simulate from a linear model 
\begin{equation*} 
Y = X^{T} \theta_0 + \epsilon
\end{equation*} 

For each component of $X$, we write $X_j = X_j^e$ if $X_j$ is endogeneous, and $X_j = X_j^x$ if $X_j$ is exogeneous. $X_j^e$, $X_j^x$ and $\epsilon$ are generated according to two different setups which we outline below. 

{\bf{Setup 1:}}

\begin{equation*} 
X_j^e = (F_j + H_j + 1)(3\epsilon + 1), \quad X_j^x = F_j + H_j + u_j
\end{equation*} 
where $\{\epsilon, u_1, \ldots, u_p\}$ are independent $N(0,1)$. Here $F = (F_1, \ldots, F_p)^{T}$ and $H = (H_1, \ldots, H_p)^{T}$ are the transformations of a three-dimensional instrumental variable $V = (V_1, V_2, V_3)^{T} \sim N(0, \mathrm{I}_3)$ and $W = (F,H)$. There are $m$ endogeneous variables $(X_1, X_2, X_3, X_6, \ldots, X_{2+m})^{T}$ with $m = \{10,50\}$.

The Fourier basis are applied as the working instruments,  
\begin{align*} 
F = \sqrt{2}\{\sin(j\pi V_1) + \sin(j \pi V_2) + \sin(j \pi V_3): j \le p\} \\
H = \sqrt{2}\{\cos(j\pi V_1) + \cos(j \pi V_2) + \cos(j \pi V_3): j \le p\}
\end{align*} 

{\bf{Setup 2:}} 
\begin{equation*}
X_j^e = \tilde{X_j} + \sum\limits_{t = 1}^{T} z_{T(j - 1) + t}, \quad \epsilon = \zeta + {\tilde{X}}^{'} \gamma_0
\end{equation*}

where $\gamma0 = (.1,.2,.3,\ldots,1,0,\ldots)^{'}$, $z \sim N(0, I_{Tp})$, ${\tilde{X}} \sim N(0, \Sigma)$, $\Sigma_{ij} = 0.3^{\mid{i-j}\mid}$, and $\zeta \sim N(0, 1/4^2)$.

The two setups are taken from \citet{fan2014endogeneity} and \citet{belloni2017simultaneous} respectively. For both setups, we choose the design vector $\theta_0 \in {\mathbb{R}}^{p}$ with number of non-zero components, $s_\star = 5$, that takes the value 

\begin{equation*}
\theta_\star = \mbox{SNR} \times (5, -4, 7, -2 ,1.5, 0, \ldots, 0)'
\end{equation*} 
where $\mbox{SNR} > 0$ is a signal-to-noise parameter. Varying the $\mbox{SNR}$ parameter allows us to explore the performance of our approach for varying levels of signal strength.  We performed simulations for $\mbox{SNR} = \{0.25, 1\}$, sample size $n = 100$, and number of covariates $p \in \{100,200\}$. $\mbox{SNR} = 1$ corresponds to high $\mbox{SNR}$ ($\mbox{hSNR}$) while $\mbox{SNR} = 0.25$ corresponds to weak $\mbox{SNR}$ ($\mbox{wSNR}$). 

In our experiments, we used 100 replications to aggregate the results. Four performance measures are used to compare the methods. The first measure is the number of correctly identified nonzero coefficients, that is, the true positive (TP). The second measure is the number of incorrectly identified coefficients, the false positive (FP). The last two measures are mean squared errors, ${\mbox{MSE}}_{S}$ \&  ${\mbox{MSE}}_{N}$, of the important and unimportant regressors respectively determined by averaging $\|\hat\theta - \theta_\star\|^2$ on $S = \{1,2,3,4,5\}$ and $N = S^c$ over 100 replications. The standard errors over 100 replications for each measure are also reported. In each run of the MCMC sampler, $\hat\theta$ is initialized using penalized least squares $[\mbox{SCAD}(\lambda_{\mbox{scad}})]$ with $\lambda_{\mbox{scad}} = 1$ and $\delta$ is initialized by setting $\hat\delta^{(0)} = \textbf{1}(|\hat\theta^{(0)}_{j}|>0)$. $\mbox{FGMM}$ results are obtained using the code on the authors' website by setting the $\mbox{FGMM}$ parameter $\lambda_{\mbox{fgmm}} = 0.3$. Our proposed method has three tuning parameters. In all our empirical work, we set 
\begin{equation*} 
\frac{1}{\rho^2} = \frac{\log{(p*q)}}{\sqrt{n}}, \gamma = \frac{10}{p}, \lambda = \begin{cases} n & \mbox{Setup 1} \\
n^{1/3} & \mbox{Setup 2} \end{cases} 
\end{equation*} 
where $q$ is the number of instrumental variables.

The summary of our results is presented in Tables \ref{table:setup1} - \ref{table:setup2} and figure \ref{fig:setup1}. We compare our method, quasi-Bayesian moment restrictions model ($\mbox{BMRM}$), with $\mbox{FGMM}$ and penalized least squares ($\mbox{PLS}$).

In Setup 1 and for the high signal-to-noise regime ($\mbox{SNR} = 1$), $\mbox{PLS}$ performs well in selecting the true coefficients but, at the same time, includes a significantly large number of false positives. $\mbox{FGMM}$ reduces the number of unimportant coefficients while keeping the important coefficients in the model. In contrast, \mbox{BMRM} not only selects all the important coefficients but also succeeds in weeding out almost all the unimportant coefficients. Our proposed method stands out in this regard. Further, the average $\mbox{MSE}_S$ of both $\mbox{FGMM}$ and $\mbox{BMRM}$ is less than that of PLS since the instrumental variables estimation is used for estimating the coefficients. The lower panel of \ref{table:setup1} displays results for the weak signal-to-noise regime ($\mbox{SNR} = 0.25$) case. Again, $\mbox{BMRM}$ outperforms $\mbox{FGMM}$ in selecting the important regressors and removing the unimportant regressors. 

To study the effect of variable selection when the number of endogenous variables is increased, we run another set of simulations with the same data generating process as in table \ref{table:setup1} but we increase $m$ from 10 to 50. Figure \ref{fig:setup1} display our results. It  is clearly seen that $\mbox{BMRM}$ outperforms $\mbox{FGMM}$ and $\mbox{PLS}$.  

In Setup 2 and for $\mbox{hSNR}$ regime, $\mbox{PLS}$ identifies the important covariates but it does so at the cost of overfitting resulting in false discoveries. In terms of TPs, although $\mbox{BMRM}$ does not always outperform its competitors, it remains competitive. When signals are low (lower panel of Table \ref{table:setup2}), all methods under consideration have trouble finding the right model, highlighting the difficulty of identifying the right model with limited sample size. On the other hand, there is some promising news. In all cases, the proposed $\mbox{BMRM}$ method leads to slightly lower false positive rates compared to $\mbox{FGMM}$ and $\mbox{PLS}$.

\begin{table}[H]
\begin{center} 
\caption{Setup 1: Endogeneity in both important and unimportant regressors, n = 100,m = 10, $s_0 = 5$. Top and bottom panels correspond to $\mbox{hSNR}$ and $\mbox{wSNR}$ regimes respectively.}
\begin{flushleft} 
\scalebox{0.9}{
\begin{tabular}{|c|cccc|cccc|cccc|} \toprule
& \multicolumn{4}{c}{\bf{BMRM}} & \multicolumn{4}{c}{\bf{FGMM}} & \multicolumn{4}{c}{\bf{PLS}}\\ 
\cmidrule(lr){2-5} \cmidrule(lr){6-9} \cmidrule(lr){10-13} \\
p & TP & FP & ${\mbox{MSE}}_S$ & ${\mbox{MSE}}_N$ & TP & FP & ${\mbox{MSE}}_S$ & ${\mbox{MSE}}_N$ & TP & FP & ${\mbox{MSE}}_S$ & ${\mbox{MSE}}_N$ \\
\hline
100 & ${\underset{(0)}{5}}$ & ${\underset{(0\cdot11)}{0\cdot11}}$ & ${\underset{(0\cdot002)}{0\cdot004}}$ & ${\underset{(0\cdot001)}{0\cdot002}}$ & ${\underset{(0)}{5\cdot00}}$ & ${\underset{(1\cdot14)}{3\cdot14}}$ & ${\underset{(0\cdot002)}{0\cdot002}}$ & ${\underset{(0)}{0}}$ & ${\underset{(0)}{5}}$ & ${\underset{(14\cdot98)}{59\cdot08}}$ & ${\underset{(0\cdot03)}{0\cdot02}}$ & ${\underset{(0\cdot004)}{0\cdot003}}$ \\

200 &  ${\underset{(0\cdot003)}{4\cdot99}}$ & ${\underset{(0\cdot30)}{0\cdot57}}$ & ${\underset{(0\cdot003)}{0\cdot005}}$ & ${\underset{(0\cdot000)}{0\cdot003}}$ & ${\underset{(0\cdot10)}{4\cdot99}}$ & ${\underset{(1\cdot42)}{3\cdot29}}$ & ${\underset{(0\cdot05)}{0\cdot007}}$ & ${\underset{(0)}{0}}$ & ${\underset{(0)}{5}}$ & ${\underset{(28\cdot62)}{98\cdot48}}$ & ${\underset{(0\cdot22)}{0\cdot15}}$ & ${\underset{(0\cdot02)}{0\cdot01}}$ \\

\hline \hline 

100 & ${\underset{(0\cdot81)}{4\cdot79}}$ & ${\underset{(2\cdot79)}{0\cdot66}}$ & ${\underset{(0\cdot18)}{0\cdot04}}$ & ${\underset{(0\cdot02)}{0\cdot02}}$ & ${\underset{(0\cdot67)}{4\cdot36}}$ & ${\underset{(1\cdot20)}{3\cdot18}}$ & ${\underset{(0\cdot04)}{0\cdot03}}$ & ${\underset{(0\cdot000)}{0\cdot000}}$ & ${\underset{(0\cdot1)}{4\cdot99}}$ & ${\underset{(10\cdot77)}{21\cdot41}}$ & ${\underset{(0\cdot001)}{0\cdot01}}$ & ${\underset{(0\cdot000)}{0\cdot000}}$ \\

200 &  ${\underset{(0\cdot50)}{4\cdot91}}$ & ${\underset{(0\cdot45)}{0\cdot26}}$ & ${\underset{(0\cdot11)}{0\cdot017}}$ & ${\underset{(0\cdot002)}{0\cdot003}}$ & ${\underset{(0\cdot66)}{4\cdot36}}$ & ${\underset{(1\cdot13)}{3\cdot29}}$ & ${\underset{(0\cdot04)}{0\cdot03}}$ & ${\underset{(0)}{0}}$& ${\underset{(0\cdot20)}{4\cdot96}}$ & ${\underset{(16\cdot91)}{30\cdot02}}$ & ${\underset{(0\cdot01)}{0\cdot01}}$ & ${\underset{(0\cdot000)}{0\cdot000}}$ \\

\hline
\end{tabular}}
\end{flushleft} 
\label{table:setup1} 
\end{center} 
\end{table}

\begin{table}[h]
\begin{center} 
\caption{Setup 2: Endogeneity in all regressors, n = 100, T = 2, $s_0 = 5$. Top and bottom panels correspond to $\mbox{hSNR}$ and $\mbox{wSNR}$ regimes respectively.}
\begin{flushleft} 
\scalebox{0.9}{
\begin{tabular}{|c|cccc|cccc|cccc|} \toprule
& \multicolumn{4}{c}{\bf{BMRM}} & \multicolumn{4}{c}{\bf{FGMM}} & \multicolumn{4}{c}{\bf{PLS}}\\ 
\cmidrule(lr){2-5} \cmidrule(lr){6-9} \cmidrule(lr){10-13} \\
p & TP & FP & ${\mbox{MSE}}_S$ & ${\mbox{MSE}}_N$ & TP & FP & ${\mbox{MSE}}_S$ & ${\mbox{MSE}}_N$ & TP & FP & ${\mbox{MSE}}_S$ & ${\mbox{MSE}}_N$ \\
\hline
100 & ${\underset{(0\cdot57)}{4\cdot76}}$ & ${\underset{(1\cdot39)}{1\cdot74}}$ & ${\underset{(0\cdot45)}{0\cdot31}}$ & ${\underset{(0\cdot04)}{0\cdot04}}$ & ${\underset{(0\cdot50)}{4\cdot79}}$ & ${\underset{(1\cdot98)}{2\cdot93}}$ & ${\underset{(0\cdot45)}{0\cdot34}}$ & ${\underset{(0\cdot005)}{0\cdot002}}$ & ${\underset{(0)}{5}}$ & ${\underset{(3\cdot58)}{8\cdot28}}$ & ${\underset{(0\cdot05)}{0\cdot07}}$ & ${\underset{(0\cdot002)}{0\cdot008}}$ \\

200 &  ${\underset{(0\cdot56)}{4\cdot78}}$ & ${\underset{(1\cdot52)}{2\cdot20}}$ & ${\underset{(0\cdot44)}{0\cdot29}}$ & ${\underset{(0\cdot009)}{0\cdot013}}$ & ${\underset{(0\cdot58)}{4\cdot69}}$ & ${\underset{(2\cdot14)}{3\cdot06}}$ & ${\underset{(0\cdot45)}{0\cdot39}}$ & ${\underset{(0\cdot003)}{0\cdot001}}$ & ${\underset{(0)}{5}}$ & ${\underset{(5\cdot68)}{10\cdot70}}$ & ${\underset{(0\cdot08)}{0\cdot10}}$ & ${\underset{(0\cdot002)}{0\cdot005}}$  \\

\hline \hline 

100 & ${\underset{(1\cdot28)}{2\cdot57}}$ & ${\underset{(1\cdot01)}{1\cdot16}}$ & ${\underset{(0\cdot56)}{0\cdot53}}$ & ${\underset{(0\cdot06)}{0\cdot05}}$ & ${\underset{(1\cdot09)}{2\cdot98}}$ & ${\underset{(1\cdot91)}{2\cdot54}}$ & ${\underset{(0\cdot88)}{0\cdot54}}$ & ${\underset{(0\cdot008)}{0\cdot004}}$ & ${\underset{(0\cdot48)}{4\cdot35}}$ & ${\underset{(1\cdot31)}{4\cdot73}}$ & ${\underset{(0\cdot03)}{0\cdot08}}$ & ${\underset{(0\cdot002)}{0\cdot007}}$ \\

200 &  ${\underset{(0\cdot88)}{3\cdot38}}$ & ${\underset{(1\cdot23)}{1\cdot74}}$ & ${\underset{(0\cdot30)}{0\cdot28}}$ & ${\underset{(0\cdot02)}{0\cdot02}}$ & ${\underset{(1\cdot02)}{3\cdot05}}$ & ${\underset{(2\cdot13)}{2\cdot78}}$ & ${\underset{(0\cdot45)}{0\cdot42}}$ & ${\underset{(0\cdot004)}{0\cdot002}}$& ${\underset{(0\cdot52)}{4\cdot25}}$ & ${\underset{(2\cdot37)}{5\cdot61}}$ & ${\underset{(0\cdot04)}{0\cdot09}}$ & ${\underset{(0\cdot001)}{0\cdot004}}$ \\

\hline
\end{tabular}}
\end{flushleft} 
\label{table:setup2} 
\end{center} 

\end{table}

\begin{figure}[t]
    \centering 
\caption*{$(n,p) = (100,100)$, $m = 50$ and $\theta_0 = [5,-4,7,-2,1.5,{\bm{0}}]'$}
\begin{subfigure}{0.25\textwidth}
  \includegraphics[width=\linewidth]{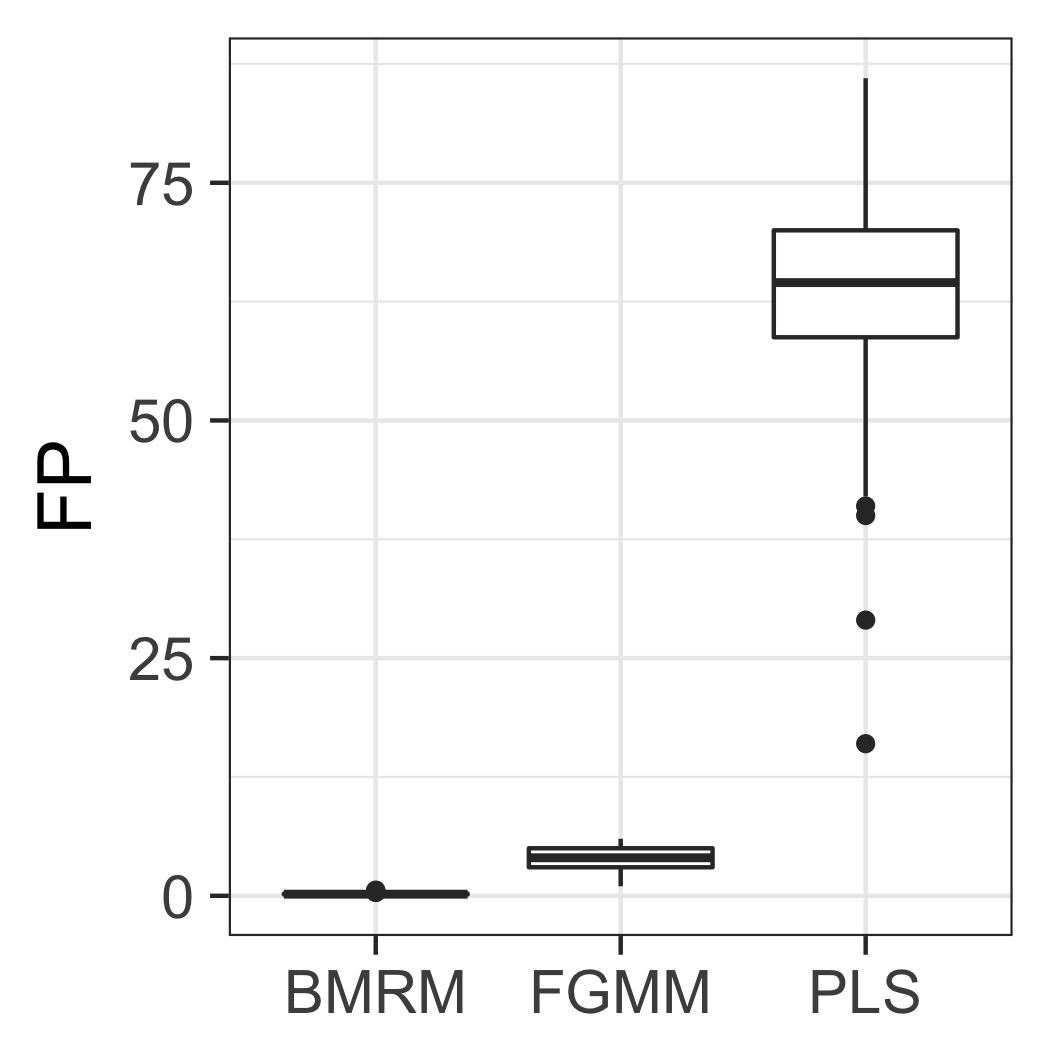}
\end{subfigure}\hfil 
\begin{subfigure}{0.25\textwidth}
  \includegraphics[width=\linewidth]{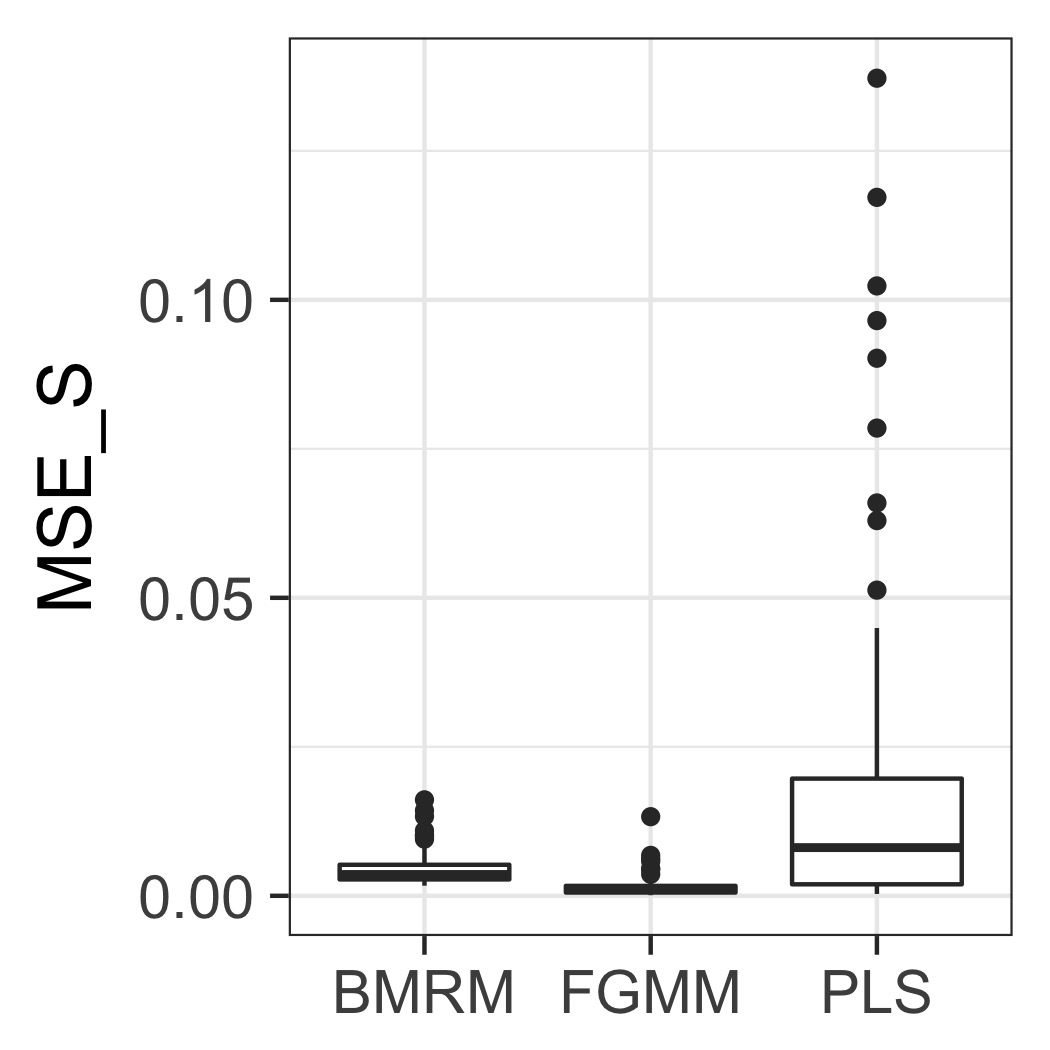}
\end{subfigure}\hfil 
\begin{subfigure}{0.25\textwidth}
  \includegraphics[width=\linewidth]{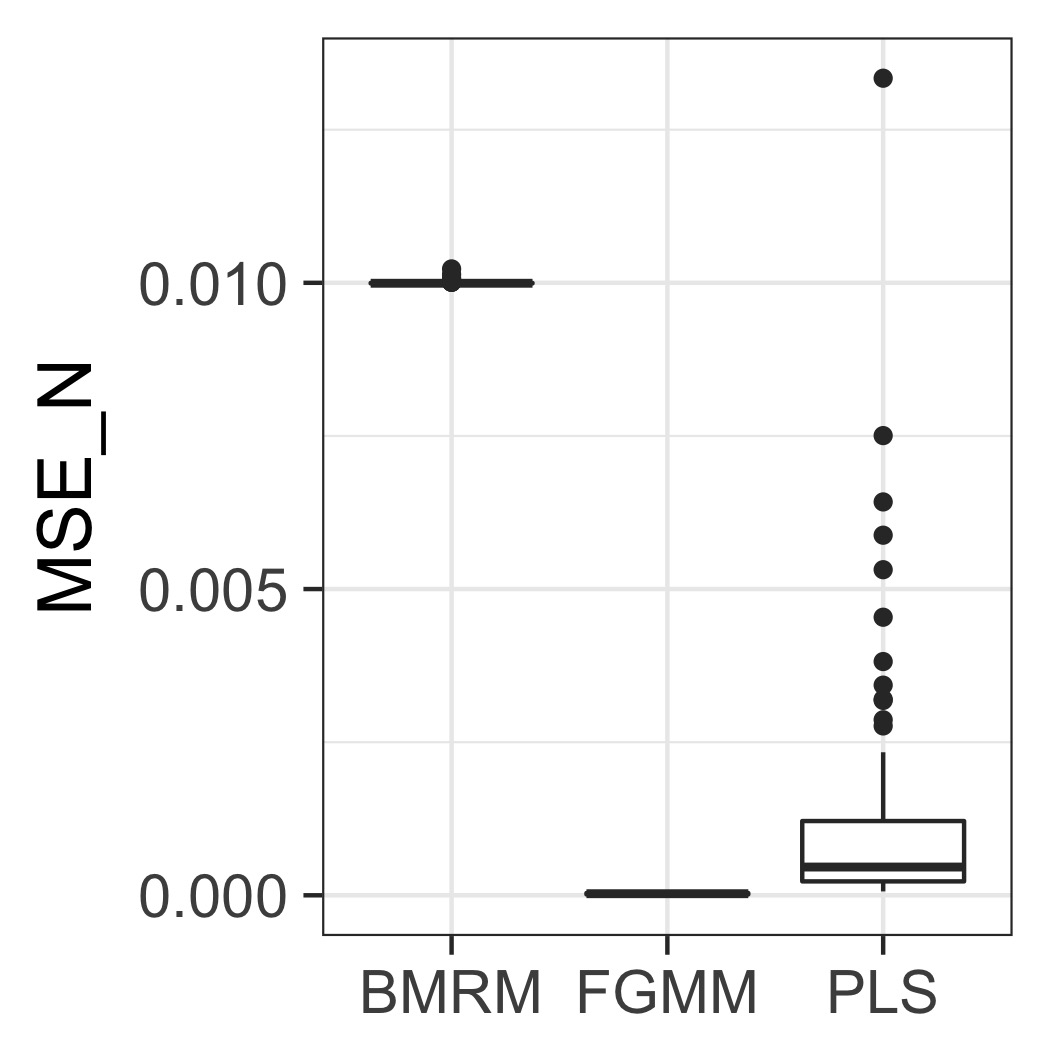}
\end{subfigure}

\medskip

\caption*{$(n,p) = (100,100)$, $m = 50$ and $\theta_0 = [1.25, -1.0, 1.75, -0.5, 0.375, {\bm{0}}]'$}
\begin{subfigure}{0.25\textwidth}
  \includegraphics[width=\linewidth]{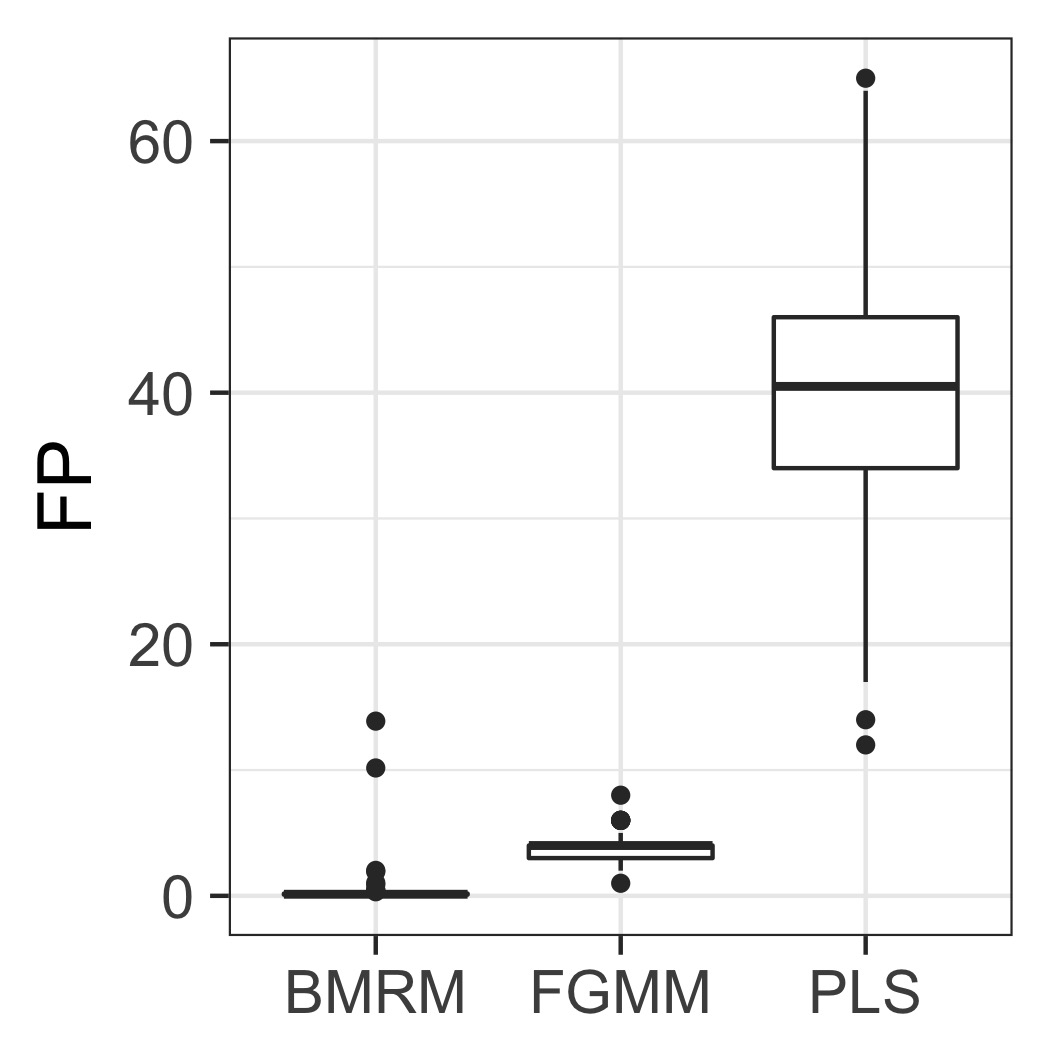}
\end{subfigure}\hfil 
\begin{subfigure}{0.25\textwidth}
  \includegraphics[width=\linewidth]{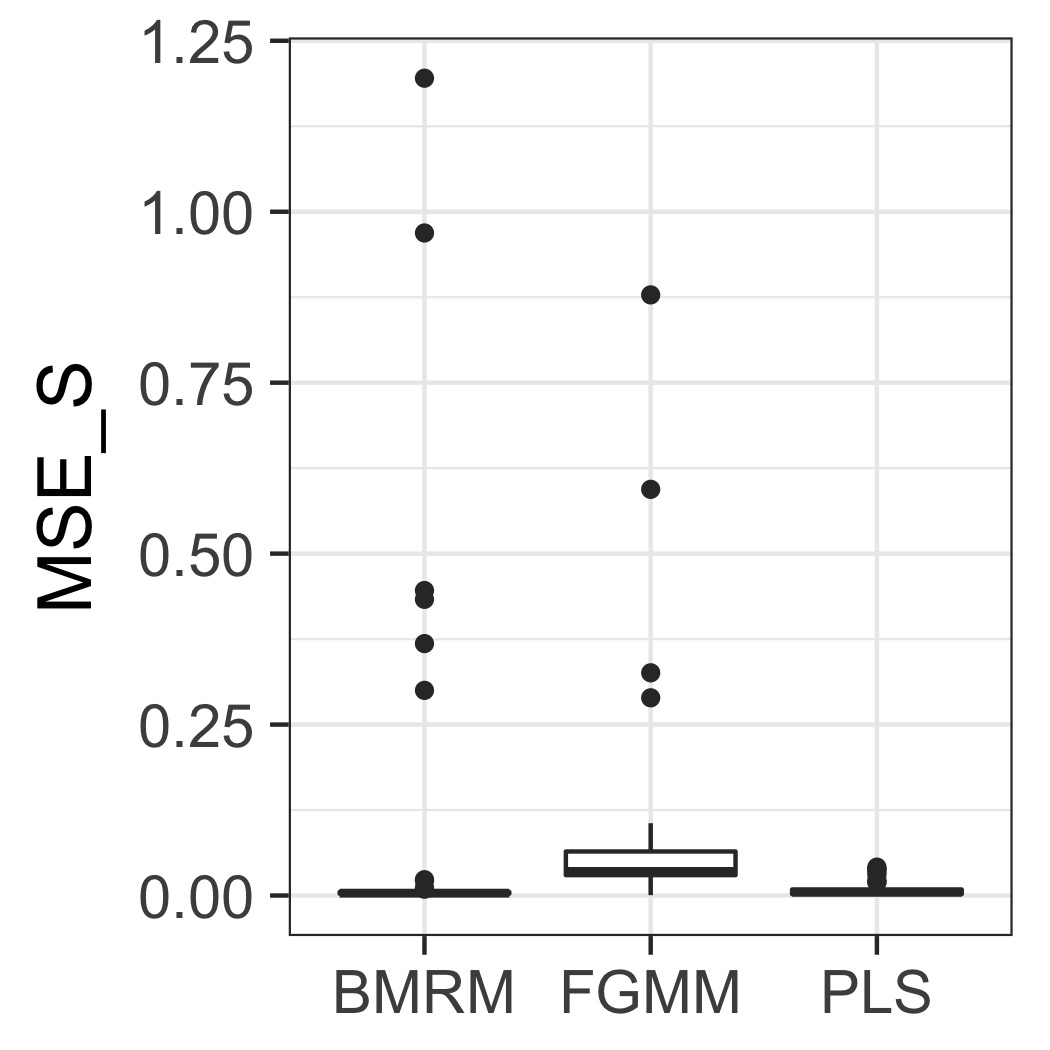}
\end{subfigure}\hfil 
\begin{subfigure}{0.25\textwidth}
  \includegraphics[width=\linewidth]{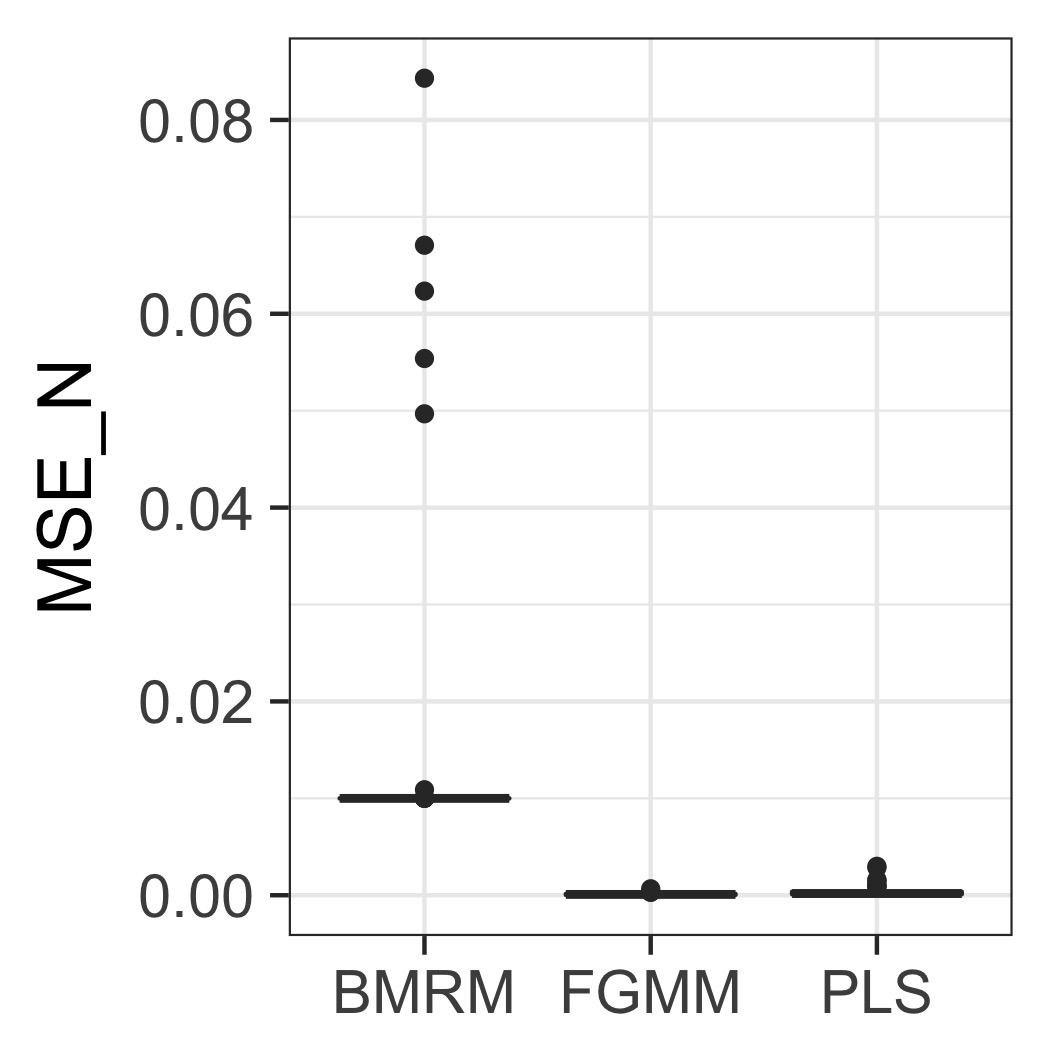}
\end{subfigure}

\medskip
\caption*{$(n,p) = (100,200)$, $m = 50$ and $\theta_0 = [5,-4,7,-2,1.5,{\bm{0}}]'$}
\begin{subfigure}{0.25\textwidth}
  \includegraphics[width=\linewidth]{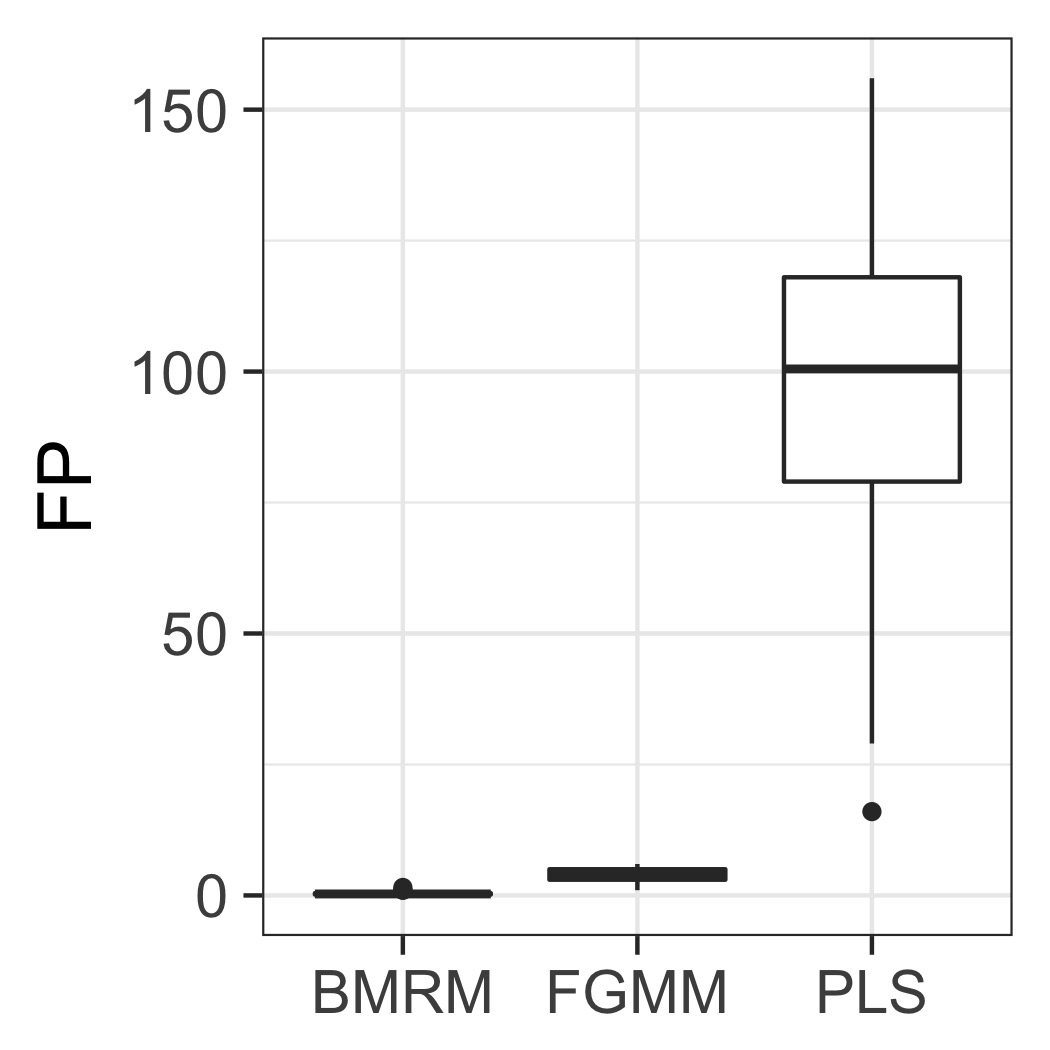}
\end{subfigure}\hfil 
\begin{subfigure}{0.25\textwidth}
  \includegraphics[width=\linewidth]{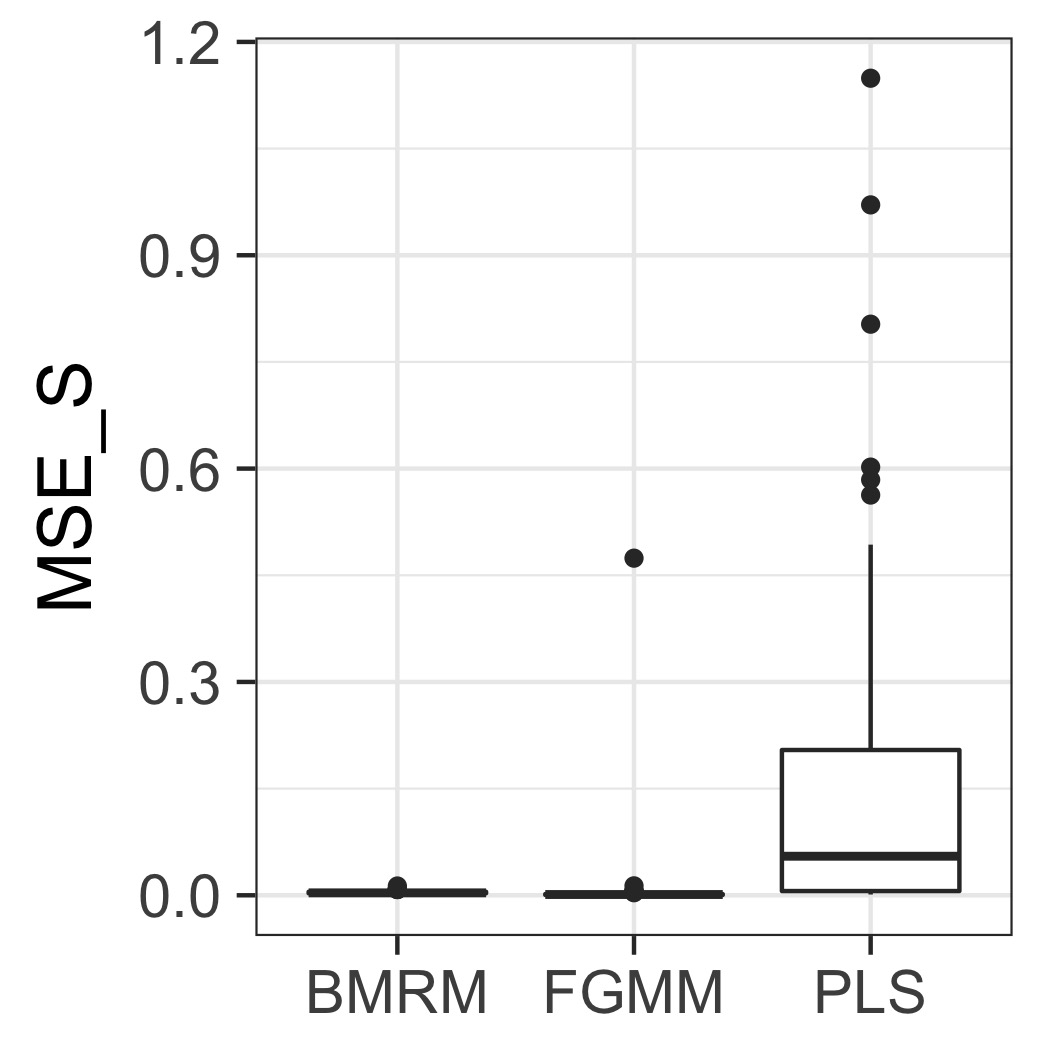}
\end{subfigure}\hfil 
\begin{subfigure}{0.25\textwidth}
  \includegraphics[width=\linewidth]{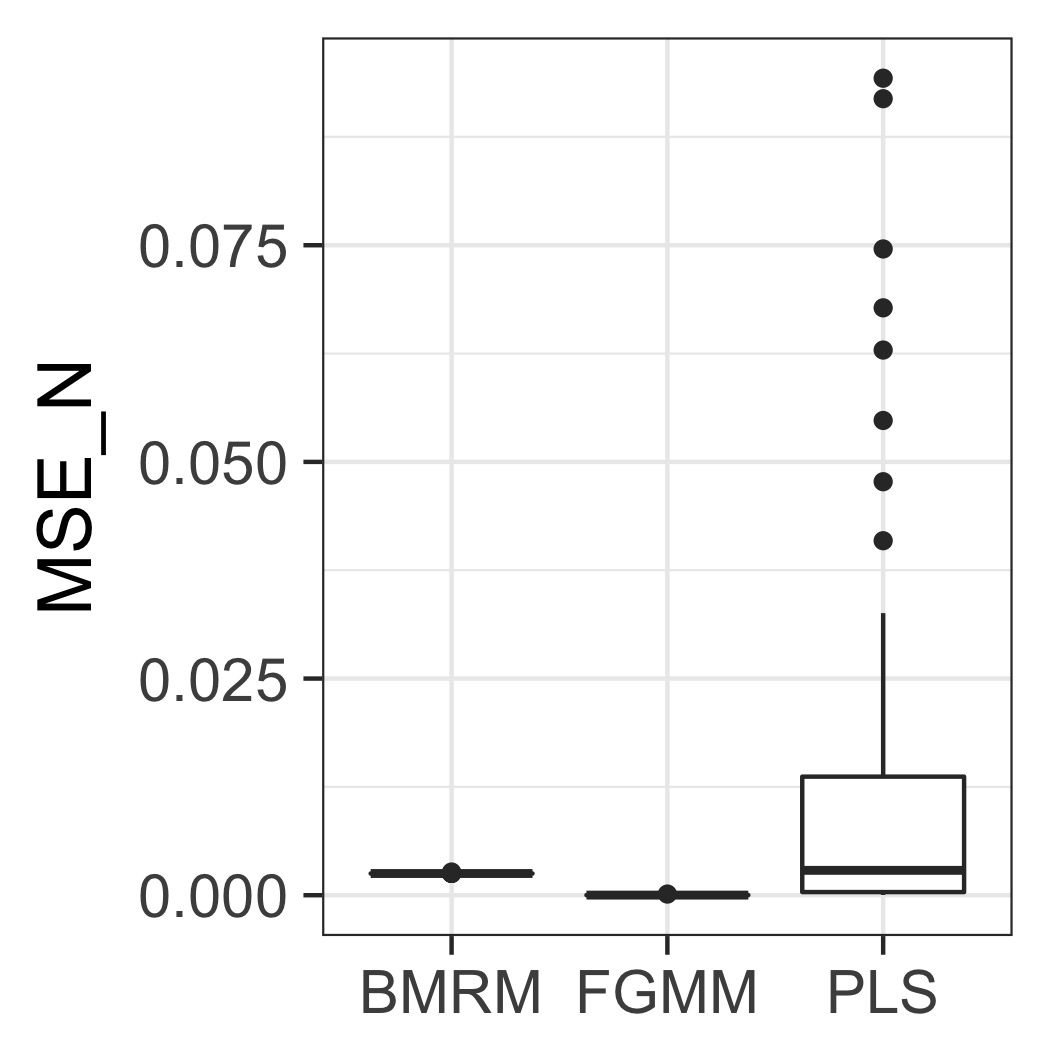}
\end{subfigure}

\medskip
\caption*{$(n,p) = (100,200)$, $m = 50$ and $\theta_0 = [1.25, -1.0, 1.75, -0.5, 0.375, {\bm{0}}]'$}
\begin{subfigure}{0.25\textwidth}
  \includegraphics[width=\linewidth]{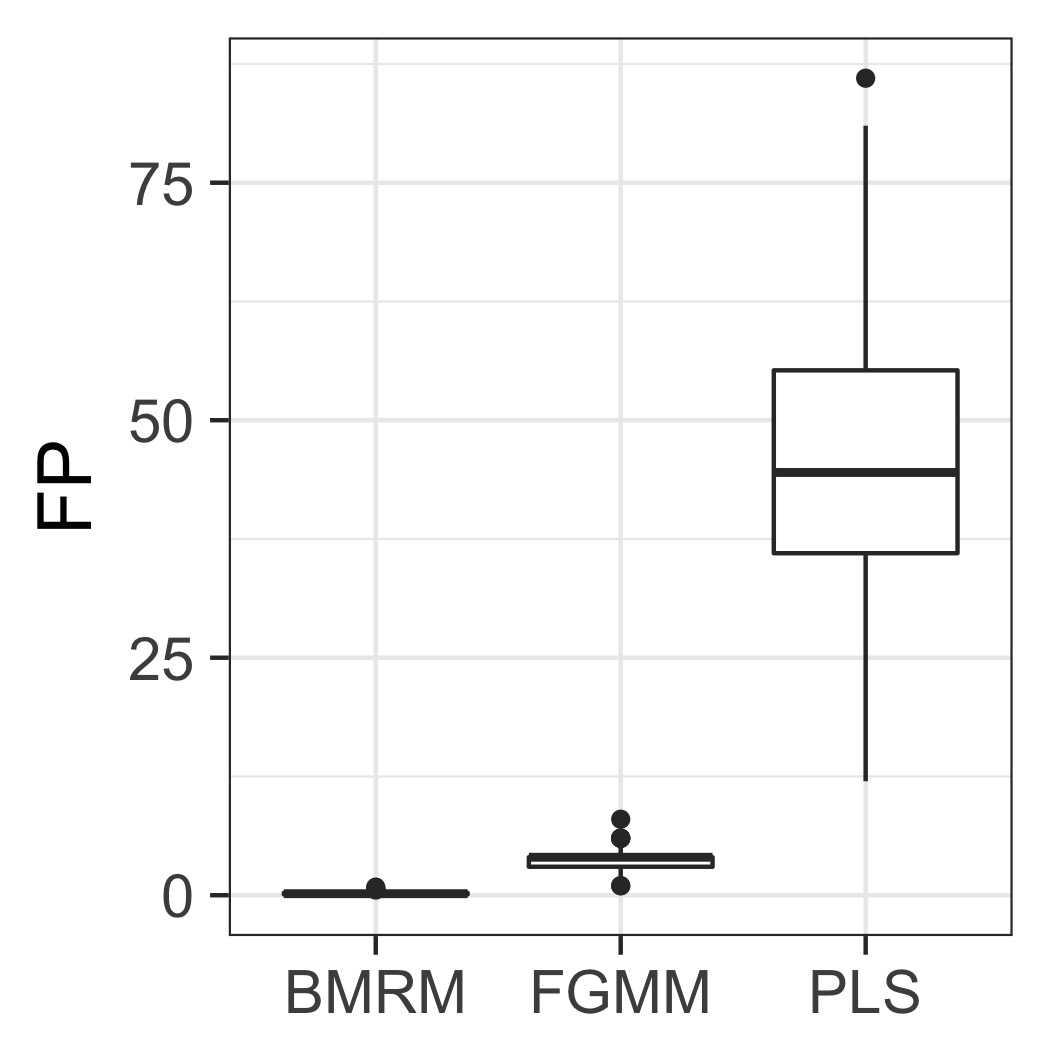}
\end{subfigure}\hfil 
\begin{subfigure}{0.25\textwidth}
  \includegraphics[width=\linewidth]{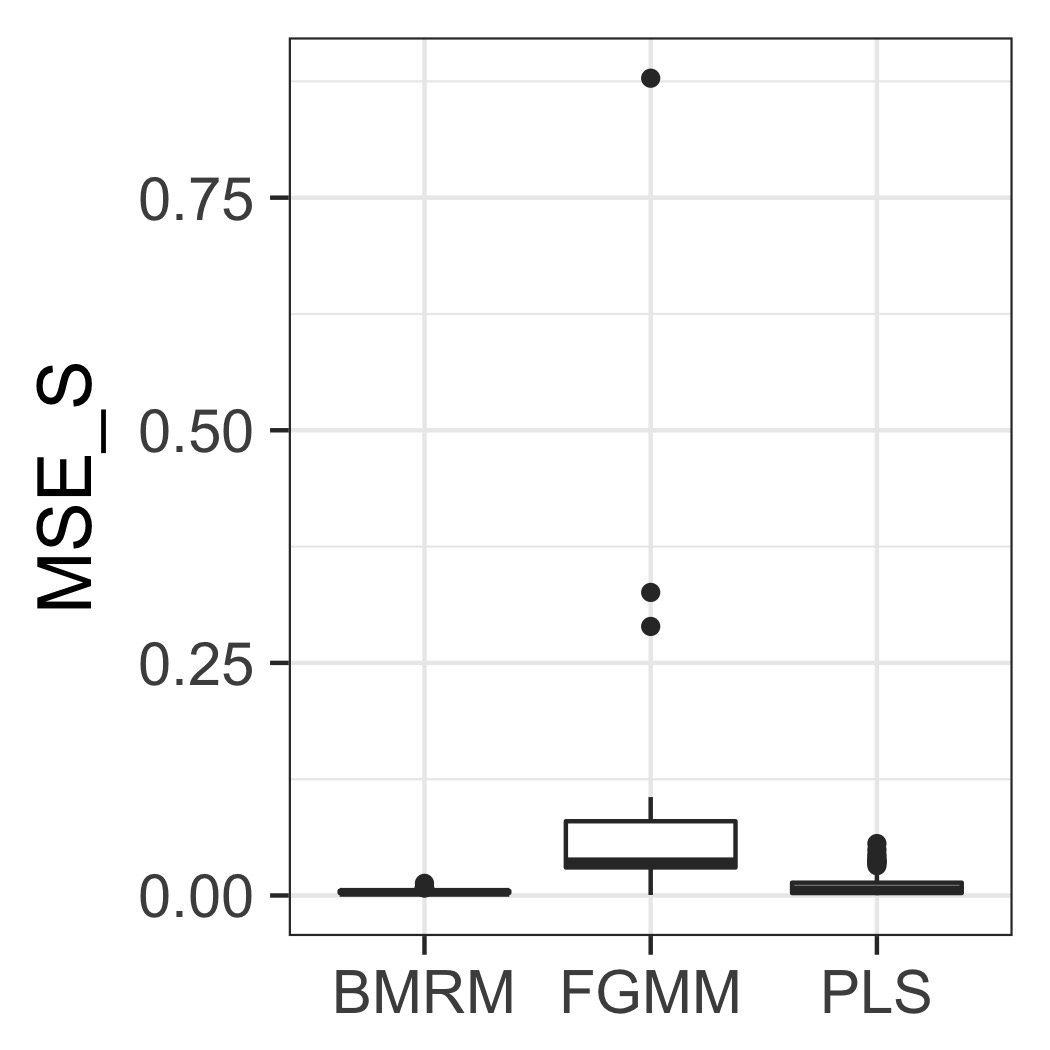}
\end{subfigure}\hfil 
\begin{subfigure}{0.25\textwidth}
  \includegraphics[width=\linewidth]{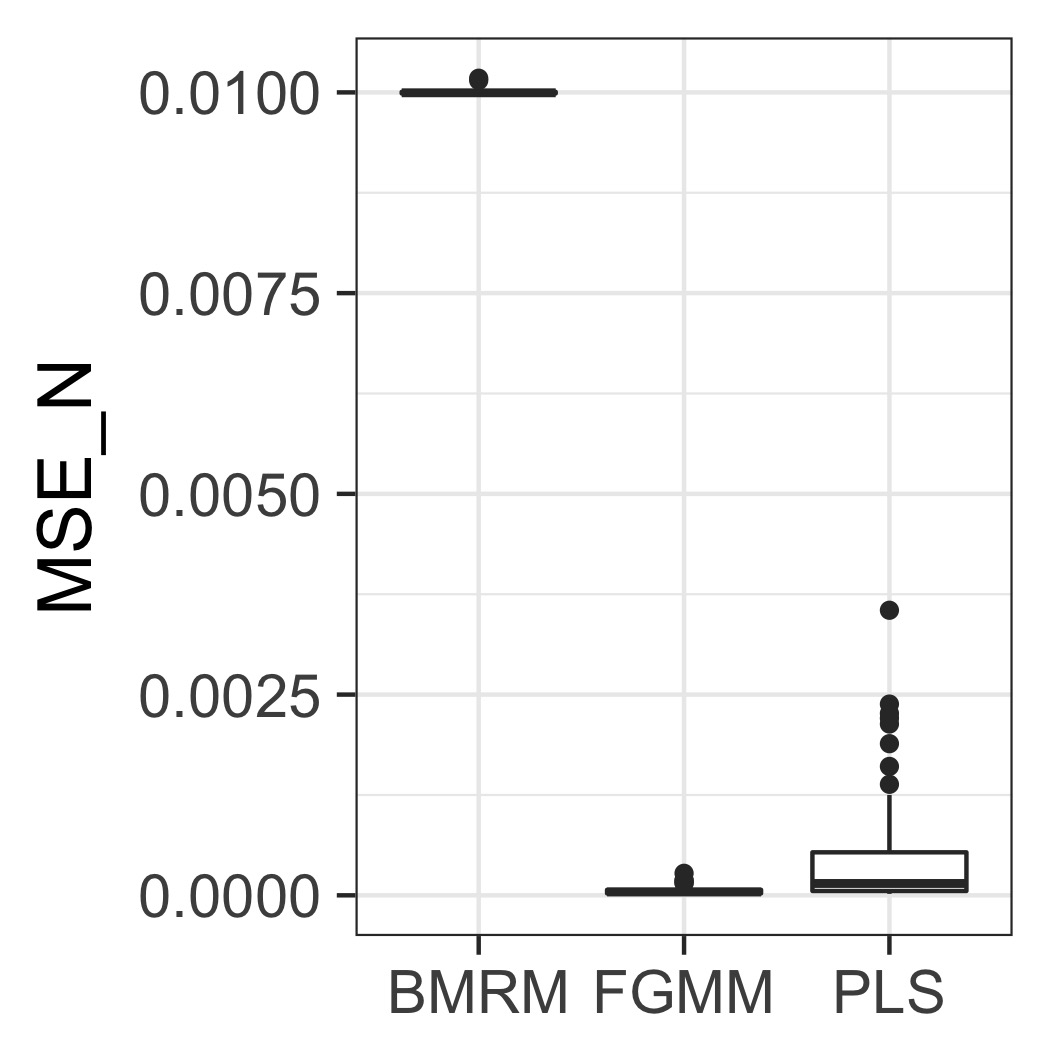}
\end{subfigure}

\caption{\small{Setup 1: False Positives (left), MSE for the active components (middle), and MSE for the inactive components (right) averaged over the runs of the MCMC sampler for 100 replicates.}}
\label{fig:setup1}
\end{figure}

\section{Endogeneity in Angrist \& Krueger Data} 
\citet{angrist1991does} use the large samples available in the U.S. Census to estimate wage equations where quarter of birth is used as an instrument for educational attainment. The coefficient of interest is $\theta_1$, which summarizes the causal impact of education on earning. We apply our method to the data that comes from the 1980 U.S. Census and consists of 329,509 males born in $1930 - 1939$. Consider the model, 
\begin{equation*}
y_i = \pscal{x_i}{\theta} + \epsilon_i, \quad \PE(\epsilon_i \vert w_i) = 0
\end{equation*}
where $y_i$ is the log(wage) of individual $i$ and $x_i$ denotes a set of 510 variables: education, 9 year-of-birth (YOB) dummies, 50 state-of-birth (SOB) dummies, and 450 state-of-birth $\times$ year-of-birth (YOB$\times$SOB) interactions. For individual $i$, we write 
\begin{equation*} 
x_i = [{\text{Education}}_{i}, {\text{YOB}}_i, {\text{SOB}}_i, ({{\text{YOB}}\times{\text{SOB}}})_i] \in {\mathbb{R}}^{510 \times 1}
\end{equation*} 

As instruments, $w_i$, we use 3 quarter-of-birth dummies (QOB) for the endogeneous variable education, and allow the exogeneous variables to be instruments for themselves. For individual $i$, we write 

\begin{equation*} 
w_i = [{\text{QOB}}_i, {\text{YOB}}_i, {\text{SOB}}_i, ({{\text{YOB}}\times{\text{SOB}}})_i] \in {\mathbb{R}}^{512 \times 1} 
\end{equation*} 

Note that there is an irregular dependence between the variables $x_i$ and their corresponding instruments $w_i$. For example, if the endogenous variable {\textit{education}} is active, then all 3 instruments, corresponding to $\mbox{QOB}$, are included in the model. 

\mbox{BMRM} selects a model with 9 covariates. The 95\% credible interval of $\theta_1$ is given by $[0.096, 0.129]$. 

\begin{figure}[H]
\centering 
\includegraphics[width=100mm]{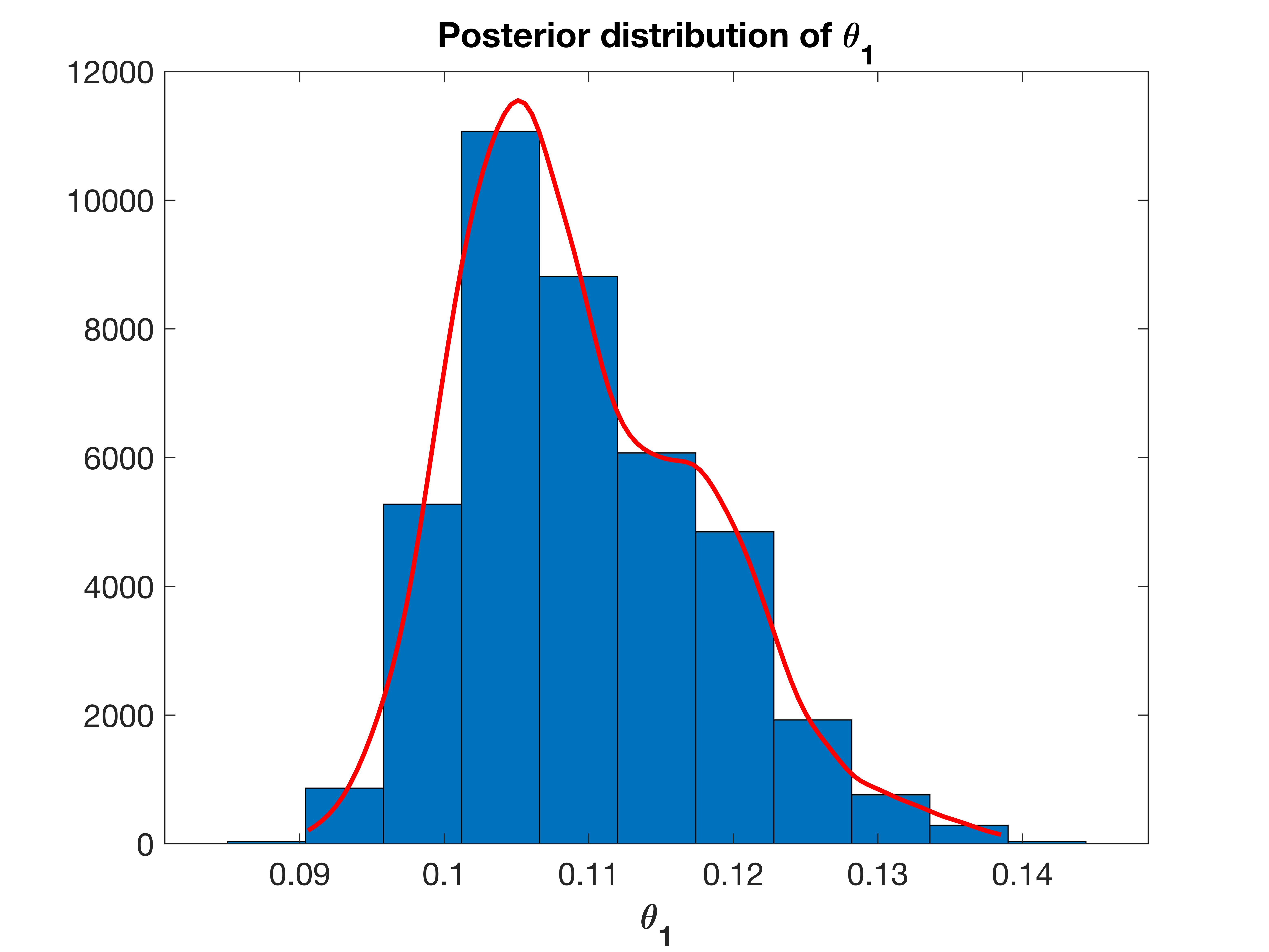}
\caption{Posterior distribution of $\theta_1$ which summarizes the causal impact of education on earning. The posterior mean is .1096.}
\label{theta_post} 
\end{figure}

\section{Proofs}\label{sec:proofs}
\subsection{Proof of Theorem \ref{thm1}}\label{sec:proof:thm1}
Our methods of proof are similar to techniques developed in \cite{castillo:etal:14,atchade:15b}.  For $\delta\in \Delta\eqdef\{0,1\}^p$, we will write $\mu_\delta(\rmd\theta)$ to denote the product measure on $\rset^p$ given by
\[\mu_{\delta}(\rmd \theta)\eqdef \prod_{j=1}^p\mu_{\delta_{j}}(\rmd \theta_{j}),\]
 where  $\mu_{0}(\rmd x)$ is the Dirac mass at $0$, and $\mu_1(\rmd x)$ is the Lebesgue measure on $\rset$.  First we derive a lower bound on the normalizing constant.
\begin{lemma}\label{lem:control:nc}
Assume H\ref{H1}-H\ref{H2}. Let $\check \Cset_\gamma(z)$ denote the normalizing constant of $\check\Pi(\cdot\vert z)$. For $z\in\e$,
\begin{equation}\label{eq:control:nc:check:G}
\Cset_{\gamma}(z) \geq   \omega_{\delta_\star} q_{\delta_\star,\theta_\star}(z)e^{-\frac{\rho^2}{2}\|\theta_\star\|_2^2} \left(\frac{\rho^2}{\frac{n\bar\kappa}{\lambda}+\rho^2}\right)^{
\frac{s_\star}{2}}.
\end{equation}
\end{lemma}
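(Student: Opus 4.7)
The argument rests on reducing the full sum over $\delta\in\Delta$ to the single term $\delta=\delta_\star$, and then evaluating the resulting $p$-dimensional Gaussian integral in $\theta$ essentially in closed form. By non-negativity of every summand in $\Cset_\gamma(z)$, I would begin with $\Cset_\gamma(z)\geq \omega_{\delta_\star}\int q_{\delta_\star,\theta}(z)\,\phi_{B_{\delta_\star}}(\theta)\,d\theta$, where $\phi_{B_{\delta_\star}}$ denotes the centered Gaussian density of covariance $B_{\delta_\star}$. The change of variable $u=\theta-\theta_\star$, combined with $y-X\theta_\star=\epsilon$ and the fact that $\theta_\star$ is supported on $\delta_\star$ (so that $\theta_\star'B_{\delta_\star}^{-1}\theta_\star = \rho^2\|\theta_\star\|_2^2$), cleanly factors out $q_{\delta_\star,\theta_\star}(z)\,e^{-\rho^2\|\theta_\star\|_2^2/2}$ and leaves a Gaussian kernel $\exp[b'u-\tfrac{1}{2}u'Au]$ to integrate, with
\[
H \eqdef X'W\Lambda_{\delta_\star}W'X = \lambda^{-1}M_{\delta_\star}'M_{\delta_\star}, \qquad A = H + B_{\delta_\star}^{-1}, \qquad b = X'W\Lambda_{\delta_\star}W'\epsilon - B_{\delta_\star}^{-1}\theta_\star.
\]

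Completing the square over $\rset^p$, the remaining Gaussian integral equals $\exp[\tfrac{1}{2}b'A^{-1}b]/\sqrt{\det(I_p + B_{\delta_\star}H)}$. Since $A^{-1}$ is positive semi-definite the exponential factor is at least one and can be discarded. At that point the lemma reduces to the purely deterministic inequality $\det(I_p + B_{\delta_\star}H) \leq (1 + n\bar\kappa/(\lambda\rho^2))^{s_\star}$, whose square root inverts into the announced factor $(\rho^2/(\rho^2 + n\bar\kappa/\lambda))^{s_\star/2}$.

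This determinantal inequality is the main obstacle, because of the mixed diagonal structure of $B_{\delta_\star}$ (entries $\rho^{-2}$ on $S\eqdef \delta_\star$, $\gamma$ on $S^c$). I would apply Sylvester's determinant identity to rewrite
\[
\det(I_p + B_{\delta_\star}H) = \det\!\bigl(I_{\|T_{\delta_\star}\|_0} + \lambda^{-1}M_{\delta_\star}B_{\delta_\star}M_{\delta_\star}'\bigr),
\]
then decompose $M_{\delta_\star}B_{\delta_\star}M_{\delta_\star}' = \rho^{-2}M_SM_S' + \gamma M_{S^c}M_{S^c}'$. The dominant piece is controlled on $\e$ through the restricted eigenvalue bound $\bar v(\delta_\star)\leq \bar\kappa$, which gives $\lambda_{\max}(M_S'M_S)\leq n\bar\kappa$; one further application of Sylvester transfers the corresponding factor to an $s_\star\times s_\star$ determinant and yields the desired $(1 + n\bar\kappa/(\lambda\rho^2))^{s_\star}$ bound. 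The residual block $\lambda^{-1}\gamma M_{S^c}M_{S^c}'$ is controlled by the smallness of $\gamma$ (of order $1/p$ in Theorem~\ref{thm1}) together with the uniform column-norm bound $\|W_{\delta_\star}'X_j\|_2 \leq \bar\kappa_1\sqrt{n}$ available on $\e$, so that its multiplicative contribution is absorbed into the leading factor.
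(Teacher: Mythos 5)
Your overall strategy is the one the paper uses: keep only the $\delta=\delta_\star$ term, expand the (exactly quadratic) log quasi-likelihood around $\theta_\star$, complete the square, discard the nonnegative exponential $e^{\frac12 b'A^{-1}b}\geq 1$, and control the remaining normalization via the restricted eigenvalue bound $\bar v(\delta_\star)\leq\bar\kappa$ available on $\e$. On the active coordinates your determinant bound $\det(I_{s_\star}+\rho^{-2}\lambda^{-1}M_S'M_S)\leq(1+n\bar\kappa/(\lambda\rho^2))^{s_\star}$ is exactly equivalent to the paper's step, which bounds the quadratic form by $\frac{n\bar\kappa}{2\lambda}\|\theta-\theta_\star\|_2^2$ and integrates an isotropic Gaussian.

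The gap is in how you treat the inactive coordinates. In the posterior \eqref{post:dist} the quasi-likelihood is $q_{\delta,\theta_\delta}(z)$, evaluated at the \emph{sparsified} vector $\theta_\delta$ (components with $\delta_j=0$ set to zero), not at the full $\theta$. Consequently the coordinates in $S^c$ never enter the likelihood: the integral factorizes, the $\gamma$-variance Gaussian block integrates to exactly $1$, and the whole computation lives on the $s_\star$-dimensional active subspace (the paper encodes this with the measure $\mu_{\delta_\star}$, Lebesgue on active coordinates and Dirac at $0$ elsewhere). Your matrix $H=X'W\Lambda_{\delta_\star}W'X$ acting on all of $\rset^p$, and hence the "residual block" $\lambda^{-1}\gamma M_{S^c}M_{S^c}'$, are artifacts of reading the likelihood as a function of the full $\theta$. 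Moreover, the claim that this residual block can be "absorbed into the leading factor" does not survive inspection: on $\e$ one only has $\|W_{\delta_\star}'X_j\|_2^2\leq n\bar\kappa_1^2$, so
\begin{equation*}
\operatorname{tr}\bigl(\lambda^{-1}\gamma M_{S^c}M_{S^c}'\bigr)\;\leq\;\frac{\gamma (p-s_\star) n\bar\kappa_1^2}{\lambda},
\end{equation*}
which with the recommended tunings $\gamma\asymp 1/p$ and $\lambda\asymp\log(pq)$ is of order $n\bar\kappa_1^2/\log(pq)$ and diverges; the factor $\det(I+\lambda^{-1}\gamma M_{S^c}M_{S^c}')^{-1/2}$ could then be exponentially small in $n/\log(pq)$, and the bound \eqref{eq:control:nc:check:G} would not follow. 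Once the quasi-likelihood is read as $q_{\delta,\theta_\delta}$, this obstacle disappears and your argument reduces to the paper's.
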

\begin{proof}
By definition we have
\begin{multline*}
\check \Cset_{\gamma}(z) =\sum_{\delta\in\Delta} \omega_\delta \int_{\rset^p} q_{\delta,\theta}(z)\frac{e^{-\frac{1}{2}\theta'B_\delta^{-1}\theta}}{\sqrt{\det(2\pi B_\delta)}}\rmd \theta \\
\geq \omega_{\delta_\star} q_{\delta_\star,\theta_\star}(z)\left(\frac{\rho^2}{2\pi}\right)^{\frac{s_\star}{2}}\int_{\rset^p} \frac{q_{\delta_\star,\theta}(z)}{q_{\delta_\star,\theta_\star}(z)} e^{-\frac{\rho^2}{2}\|\theta\|_2^2}\mu_{\delta_\star}(\theta).\end{multline*}
With $G(z) = \nabla\log q_{\delta_\star,\theta_\star}(z)$,  we have
\[\log q_{\delta_\star,\theta}(z) - \log q_{\delta_\star,\theta_\star}(z) = \pscal{G(z)}{\theta-\theta_\star} -\frac{1}{2\lambda}(\theta-\theta_\star)'X'W_{T(\delta_\star)}W_{T(\delta_\star)}'X(\theta-\theta_\star).\]
We recall that $M_\delta = W_{T(\delta)}'X$, so that for $z\in\e$,
\[\log q_{\delta_\star,\theta}(z) - \log q_{\delta_\star,\theta_\star}(z) \geq  \pscal{G(z)}{\theta-\theta_\star} -\frac{n\bar \kappa}{2\lambda}\|\theta-\theta_\star\|_2^2.\]
Hence,
\begin{multline*}
\check \Cset_\gamma(z) \geq \omega_{\delta_\star} q_{\delta_\star,\theta_\star}(z)\left(\frac{\rho^2}{2\pi}\right)^{\frac{s_\star}{2}}e^{-\frac{\rho^2}{2}\|\theta_\star\|_2^2}\\
\int_{\rset^p} e^{\pscal{G(z)}{\theta-\theta_\star}-\frac{\rho^2}{2}\left(\|\theta\|_2^2-\|\theta_\star\|_2^2\right)-\frac{n\bar\kappa}{2\lambda}\|\theta-\theta_\star\|_2^2} \mu_{\delta_\star}(\rmd \theta).\end{multline*}
We have $-\frac{\rho^2}{2}\left(\|\theta\|_2^2-\|\theta_\star\|_2^2\right) = -\frac{\rho^2}{2}\|\theta-\theta_\star\|_2^2 -\rho^2\pscal{\theta_\star}{\theta-\theta_\star}$. Therefore,
\begin{multline*}
\int_{\rset^p} e^{\pscal{G(z)}{\theta-\theta_\star}-\frac{\rho^2}{2}\left(\|\theta\|_2^2-\|\theta_\star\|_2^2\right)-\frac{n\bar\kappa}{2\lambda}\|\theta-\theta_\star\|_2^2} \mu_{\delta_\star}(\rmd \theta) \\
= \int_{\rset^p}e^{\pscal{G(z)-\rho^2\theta_\star}{u-\theta_\star}-\frac{\frac{n\bar\kappa}{\lambda}+\rho^2}{2}\|u-\theta_\star\|_2^2}\mu_{\delta_\star}(\rmd u) \geq \left(\frac{2\pi}{\frac{n\bar\kappa}{\lambda} + \rho^2}\right)^{\frac{s_\star}{2}},\end{multline*}
and (\ref{eq:control:nc:check:G}) also follows easily.
\end{proof}

Our proofs rely on the existence of some testing procedures that we take from \cite{A:B:2018}. 
Let $\Zset$ denote some sample space equipped with a reference sigma-finite measure. Let $f_\star$ be a density on $\Zset$. For each $\delta\in\Delta_{\bar s}$, suppose that we have $(\theta,z)\mapsto f_{\delta,\theta}(z)$ a jointly measurable $(0,+\infty)$-valued function on $\rset^p\times \Zset$ such that $\theta\mapsto \log f_{\delta,\theta}(z)$ is continuously differentiable for all $\delta\in\Delta_{\bar s}$, and $z\in\Zset$, and we denote its gradient by $\nabla \log f_{\delta,\theta}(z)\in\rset^p$. Given $\underline{\kappa}>0$, $\bar\rho>0$ and given $\theta_\star\in\rset^p$, we define
\begin{multline*}
\e_{\textsf{t}}\eqdef\left\{z\in\Zset:\;\sup_{\delta\in\Delta_{\bar s}}\|\nabla \log f_{\delta,\theta_\star}(z)\|_\infty\leq \frac{\bar\rho}{2},\;\; \mbox{ and for all } \delta\in\Delta_{\bar s},\;\theta\in\rset^p_\delta,\;\right.\\
\left.  \log f_{\delta,\theta}(z) -\log f_{\delta,\theta_\star}(z) -\pscal{\nabla \log f_{\delta,\theta_\star}(z)}{\theta-\theta_\star} \leq -\frac{\underline{\kappa}}{2}\|\theta-\theta_\star\|_2^2\right\}.
\end{multline*}
\begin{lemma}\label{test}
With the notations above, set $s_\star\eqdef\|\theta_\star\|_0$, $\epsilon \eqdef \frac{2(\bar s+s_\star)^{1/2}\bar\rho}{\underline{\kappa}}$. Then for any $M>2$, there exists a measurable function $\phi:\Zset\to [0,1]$ such that
\[\int_{\Zset} \phi(z)f_\star(z)\rmd z \leq 2(9p)^{\bar s} \frac{e^{-\frac{\underline{\kappa}}{32}(M\epsilon)^2}}{1-e^{-\frac{\underline{\kappa}}{32}(M \epsilon)^2}}.\]
Furthermore, for all $\delta\in\Delta_{\bar s}$ and all $\theta\in\rset^p_\delta$ such that  $\|\theta-\theta_\star\|_2 >jM\epsilon$ for some $j\geq 1$, we have
\[\int_{\e_{\textsf{t}}}(1-\phi(z)) \frac{f_{\delta,\theta}(z)}{f_{\delta,\theta_\star}(z)}f_{\star}(z)\rmd z \leq  e^{-\frac{\underline{\kappa}}{32}(jM\epsilon)^2}.\]
\end{lemma}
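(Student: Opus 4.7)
My proof plan follows the standard Le Cam--Birge Bayesian testing construction, adapted to the sparse quasi-Bayesian setting. I would build $\phi$ as the supremum of simple two-point tests indexed by a discretization of the alternative space, and control the two error types via a union bound combined with the quadratic concavity available on $\e_{\textsf{t}}$. The first step is a distance-based slicing: for each integer $j\geq 1$ and each $\delta\in\Delta_{\bar s}$, consider the shell $\mathcal{S}_{\delta,j}=\{\theta\in\rset^p_\delta:\ jM\epsilon<\|\theta-\theta_\star\|_2\leq (j+1)M\epsilon\}$. The constraint $\theta\in\rset^p_\delta$ gives $\|\theta-\theta_\star\|_2^2=\|[\theta]_\delta-[\theta_\star]_\delta\|_2^2+\|[\theta_\star]_{\delta^c}\|_2^2$, so covering $\mathcal{S}_{\delta,j}$ reduces to covering a Euclidean ball in $\rset^{|\delta|}$ of radius at most $(j+1)M\epsilon$. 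Choosing an $\eta_j$-net with $\eta_j$ a small constant multiple of $jM\epsilon$ yields $|N_{\delta,j}|\lesssim C^{\bar s}$ for a fixed absolute constant $C$, and summing over $\delta$ via $|\Delta_{\bar s}|\leq \binom{p}{\bar s}\leq p^{\bar s}$ produces a cumulative count of order $(9p)^{\bar s}$ up to constants.

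For each $(\delta,\theta_1)$ with $\theta_1\in N_{\delta,j}$, I take the Neyman--Pearson test $\phi_{\delta,\theta_1}(z)=\mathbf{1}\{f_{\delta,\theta_1}(z)\geq f_{\delta,\theta_\star}(z)\}$ together with the elementary inequality $\phi_{\delta,\theta_1}\leq\sqrt{f_{\delta,\theta_1}/f_{\delta,\theta_\star}}$. On $\e_{\textsf{t}}$, the bound $\|\nabla\log f_{\delta,\theta_\star}(z)\|_\infty \leq \bar\rho/2$, the $\ell_1$-$\ell_\infty$ duality applied to the at most $(\bar s+s_\star)$-sparse vector $\theta_1-\theta_\star$, the choice $\epsilon=2\sqrt{\bar s+s_\star}\bar\rho/\underline{\kappa}$, and the strong-concavity inequality combine to give, for $\|\theta_1-\theta_\star\|_2\geq\epsilon$,
\[\log\frac{f_{\delta,\theta_1}(z)}{f_{\delta,\theta_\star}(z)} \leq \frac{\underline{\kappa}\epsilon\|\theta_1-\theta_\star\|_2}{4} - \frac{\underline{\kappa}\|\theta_1-\theta_\star\|_2^2}{2} \leq -\frac{\underline{\kappa}\|\theta_1-\theta_\star\|_2^2}{4},\]
so that $\int_{\e_{\textsf{t}}}\phi_{\delta,\theta_1}f_\star \leq e^{-\underline{\kappa}(jM\epsilon)^2/8}$.

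I then aggregate by setting $\phi(z)=\mathbf{1}_{\e_{\textsf{t}}}(z)\sup_{j,\delta,\theta_1}\phi_{\delta,\theta_1}(z)$. A union bound yields $\int\phi\, f_\star \leq (9p)^{\bar s}\sum_{j\geq 1}e^{-\underline{\kappa}(jM\epsilon)^2/8}$, and using $j^2\geq j$ for $j\geq 1$ the tail collapses to a geometric series matching, up to a harmless constant factor, the stated $e^{-\underline{\kappa}(M\epsilon)^2/32}/(1-e^{-\underline{\kappa}(M\epsilon)^2/32})$. For Type II, the quadratic upper bound applies directly to any $\theta\in\rset^p_\delta$ (not just net points): for $\|\theta-\theta_\star\|_2>jM\epsilon$ it gives $f_{\delta,\theta}/f_{\delta,\theta_\star}\leq e^{-\underline{\kappa}(jM\epsilon)^2/4}$ on $\e_{\textsf{t}}$, and therefore
\[\int_{\e_{\textsf{t}}}(1-\phi)\frac{f_{\delta,\theta}}{f_{\delta,\theta_\star}}f_\star \leq \int_{\e_{\textsf{t}}}\frac{f_{\delta,\theta}}{f_{\delta,\theta_\star}}f_\star \leq e^{-\underline{\kappa}(jM\epsilon)^2/4} \leq e^{-\underline{\kappa}(jM\epsilon)^2/32}.\]

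The main obstacle will be the bookkeeping of constants in the Type I argument: ensuring the covering count is really $(9p)^{\bar s}$ rather than $(9p)^{\bar s+s_\star}$, and that the geometric-series summation produces the precise exponent $1/32$. This amounts to balancing the net resolution $\eta_j$, the covering base constant, and the slack between the natural per-test decay rate $\underline{\kappa}/8$ and the stated rate $\underline{\kappa}/32$, so that all $s_\star$-dependent multiplicative factors can be absorbed without worsening the exponent in $p$.
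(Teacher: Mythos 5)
Your argument is correct, but the paper itself does not prove this lemma: it simply defers to Lemma 11 of \cite{A:B:2018}, so your self-contained reconstruction of the Birg\'e--Le Cam testing argument is necessarily a different route, and it is worth recording. Your central computation is right: for $\delta\in\Delta_{\bar s}$ and $\theta\in\rset^p_\delta$ the vector $\theta-\theta_\star$ has at most $\bar s+s_\star$ nonzero entries, so on $\e_{\textsf{t}}$ the gradient bound plus H\"older gives $\pscal{\nabla\log f_{\delta,\theta_\star}(z)}{\theta-\theta_\star}\leq \tfrac{\bar\rho}{2}\sqrt{\bar s+s_\star}\,\|\theta-\theta_\star\|_2=\tfrac{\underline{\kappa}\epsilon}{4}\|\theta-\theta_\star\|_2$, which combined with the quadratic-drop condition yields $\log\bigl(f_{\delta,\theta}(z)/f_{\delta,\theta_\star}(z)\bigr)\leq-\tfrac{\underline{\kappa}}{4}\|\theta-\theta_\star\|_2^2$ whenever $\|\theta-\theta_\star\|_2\geq\epsilon$; your covering count and geometric summation then give the Type I bound with room to spare, the slack between the natural rate $\underline{\kappa}/8$ and the stated $\underline{\kappa}/32$ absorbing the constant bookkeeping you flag. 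Two remarks, however. First, as your own write-up implicitly reveals, the test plays no role in your Type II bound: because the second integral in the lemma is restricted to $\e_{\textsf{t}}$, the pointwise bound $f_{\delta,\theta}/f_{\delta,\theta_\star}\leq e^{-\underline{\kappa}(jM\epsilon)^2/4}$ on $\e_{\textsf{t}}$ already implies it after dropping $1-\phi\leq 1$, so the lemma as stated is satisfied by $\phi\equiv 0$. Second, your own construction in fact collapses to that choice: on $\e_{\textsf{t}}$ every likelihood ratio at a net point with $\|\theta_1-\theta_\star\|_2>0$ is strictly below $1$, so each indicator $\mathbf{1}\{f_{\delta,\theta_1}\geq f_{\delta,\theta_\star}\}$ vanishes there and $\phi=\mathbf{1}_{\e_{\textsf{t}}}\sup(\cdot)$ is identically zero. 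The nets and the union bound are therefore idle in this formulation; they earn their keep only in the stronger version of the lemma (presumably the one actually proved in the cited reference) where the test is not allowed to be localized to $\e_{\textsf{t}}$, i.e.\ where the Type II guarantee must hold for a $\phi$ that genuinely rejects somewhere on $\Zset$. Your proof is valid for the statement as written; just be aware that the statement as written follows almost directly from the definition of $\e_{\textsf{t}}$, and that the machinery you deploy is aimed at the harder, unlocalized version.
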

\begin{proof}See \cite{A:B:2018}, Lemma 11.
\end{proof}

\subsubsection{Proof of Theorem \ref{thm1}}
We have $\Delta \times \rset^p =  ((\Delta\setminus\Delta_{\bar s})\times\rset^p) \cup \bar\F_1 \cup \bar\F_{2}\cup \bar{\cB}_{m,M}$, where
\[
\bar\F_1\eqdef \bigcup_{\delta\in \Delta_{\bar s}} \{\delta\}\times \F_1^{(\delta)},\;\;\;\;\; \bar\F_{2}\eqdef \bigcup_{\delta\in\Delta_{\bar s}} \{\delta\}\times\F_2^{(\delta)},\]
where $\F_1^{(\delta)}\eqdef \left\{\theta\in\rset^p:\; \|\theta_\delta-\theta_\star\|_2> M\epsilon\right\}$, and \\
$\F_2^{(\delta)}\eqdef \left\{\theta\in\rset^p:\;\|\theta_\delta-\theta_\star\|_2\leq M\epsilon,\;\mbox{ and } \|\theta-\theta_\delta\|_2>m\sqrt{\gamma p}\right\}$. 
Since $\check\Pi$ is supported by $\Delta_{\bar s}$,  we have 
\[1 - \check\Pi(\bar{\cB}_{m,M}\vert Z)  = \check\Pi(\bar\F_1\vert Z) + \check\Pi(\bar\F_{2}\vert Z).\]
Setting $\F^{(\delta)}_2 =\F_{21}^{(\delta)}\cap\F_{22}^{(\delta)}$, where $\F^{(\delta)}_{21} =\{\theta\in\rset^p:\;\|\theta_\delta-\theta_\star\|_2\leq M\epsilon\}$,  and $\F^{(\delta)}_2 =\{\theta\in\rset^p:\;\|\theta-\theta_\delta\|_2 > m\sqrt{\gamma p}\}$, we have
\begin{multline}\check\Pi(\bar\F_2\vert z)  = \frac{\sum_{\delta\in\Delta} \omega_\delta \int_{\F_2^{(\delta)}} q_{\delta,\theta}(z)\frac{e^{-\frac{1}{2}\theta'B_\delta^{-1}\theta}}{\sqrt{\det(2\pi B_\delta)}}\rmd \theta}{\sum_{\delta\in\Delta} \omega_\delta \int_{\rset^p} q_{\delta,\theta}(z)\frac{e^{-\frac{1}{2}\theta'B_\delta^{-1}\theta}}{\sqrt{\det(2\pi B_\delta)}}\rmd \theta}\\
 = \frac{\sum_{\delta\in\Delta} \omega_\delta \left(\frac{\rho^2}{2\pi}\right)^{\frac{\|\delta\|_0}{2}}\left\{\int_{\F_{21}^{(\delta)}} q_{\delta,\theta}(z)e^{-\frac{\rho^2}{2}\|\theta\|_2^2}\mu_\delta(\rmd \theta)\right\}\left\{ \PP\left(V\in\F_{22}^{(\delta)}\right)\right\}}{\sum_{\delta\in\Delta} \omega_\delta \left(\frac{\rho^2}{2\pi}\right)^{\frac{\|\delta\|_0}{2}}\int_{\rset^p} q_{\delta,\theta}(z)e^{-\frac{\rho^2}{2}\|\theta\|_2^2}\mu_\delta(\rmd \theta)},\end{multline}
where $V\sim\textbf{N}_p(0,\gamma I_p)$. By standard Guassian deviation bound, it is easy to see that $\PP(V\in\F_{22}^{(\delta)}) \leq 2e^{-\frac{(m-1)^2p}{2}}$ for all $\delta\in\Delta_{\bar s}$. It follows that for all $z\in\Zset$, $\check\Pi(\bar\F_2\vert z) \leq 2e^{-\frac{(m-1)^2p}{2}}$.

Similarly, note that
\begin{eqnarray*}\check\Pi(\bar\F_1\vert z)  & = & \frac{\sum_{\delta\in\Delta} \omega_\delta \int_{\F_1^{(\delta)}} q_{\delta,\theta}(z)\frac{e^{-\frac{1}{2}\theta'B_\delta^{-1}\theta}}{\sqrt{\det(2\pi B_\delta)}}\rmd \theta}{\sum_{\delta\in\Delta} \omega_\delta \int_{\rset^p} q_{\delta,\theta}(z)\frac{e^{-\frac{1}{2}\theta'B_\delta^{-1}\theta}}{\sqrt{\det(2\pi B_\delta)}}\rmd \theta}\\
& = & \frac{\sum_{\delta\in\Delta} \omega_\delta \left(\frac{\rho^2}{2\pi}\right)^{\frac{\|\delta\|_0}{2}} \int_{\F_2^{(\delta)}} q_{\delta,\theta}(z)e^{-\frac{\rho^2}{2}\|\theta\|_2^2}\mu_\delta(\rmd \theta)}{\sum_{\delta\in\Delta} \omega_\delta \left(\frac{\rho^2}{2\pi}\right)^{\frac{\|\delta\|_0}{2}} \int_{\rset^p} q_{\delta,\theta}(z)e^{-\frac{\rho^2}{2}\|\theta\|_2^2}\mu_\delta(\rmd \theta)}.\end{eqnarray*}

We apply Lemma \ref{test} with $\theta_\star$ as in H\ref{H1}, $f_\star$ equal to the joint density of $z=(y,X,W)$ as assumed in H\ref{H1}, and $f_{\delta,\theta}(z) = q_{\delta,\theta}(z)$. In that case for $\delta\in\Delta_{\bar s}$, $\theta\in\rset^p_\delta$, we have
\begin{multline*}
\log f_{\delta,\theta}(z) - \log f_{\delta,\theta_\star}(z) -\pscal{\nabla \log f_{\delta,\theta_\star}(z)}{\theta-\theta_\star} = -\frac{1}{2\lambda}(\theta-\theta_\star)'(M_\delta'M_\delta)(\theta-\theta_\star)\\
 \leq -\frac{n\underline{\kappa}}{2\lambda}\|\theta-\theta_\star\|_2^2,\end{multline*}
for $z\in\e$. And $\nabla\log  f_{\delta,\theta_\star}(z) = \frac{1}{\lambda}M_\delta' W_{T(\delta)}' (y-X\theta_\star)$. It follows that the $j$-th component of $\nabla\log  f_{\delta,\theta_\star}(z)$ -- denoted $\nabla_j\log  f_{\delta,\theta_\star}(z)$ -- satisfies
\begin{multline*}
\left|\nabla_j\log  f_{\delta,\theta_\star}(z)\right| =\frac{1}{\lambda}\left|\sum_{i:\;T(\delta)_i\neq 0} M_{\delta,ij}\pscal{W_{T(\delta),i}}{y-X\theta_\star}\right| \\
\leq \frac{1}{\lambda} \sup_{1\leq k\leq q} \left|\pscal{W_k}{y-X\theta_\star}\right|\sum_{i:\;T(\delta)_i\neq 0}|M_{\delta,ij}|\\
\leq \sigma_0\frac{\kappa(1)}{\lambda} \sqrt{2n\bar t\log(pq)},
\end{multline*}
for $z\in\e$, where we recall that $\bar t = \max_{\delta\in\Delta_{\bar s}} \|T(\delta)\|_0$. Hence we can apply Lemma \ref{test} with $\underline{\kappa}$ taken as $\frac{n\underline{\kappa}}{\lambda}$ and $\bar\rho$ taken as $2\sigma_0\frac{\kappa(1)}{\lambda}\sqrt{2n\bar t\log(pq)}$. In that case we have 
\[\epsilon = \frac{(\bar s+s_\star)^{1/2}\bar \rho}{\underline{\kappa}} =2\sqrt{2}\sigma_0 \frac{\bar\kappa(1)}{\underline{\kappa}} \sqrt{\frac{(\bar s + s_\star)\bar t\log(pq)}{n}}. \]

 Let $\phi$ denote the test function asserted by Lemma \ref{test} below, where $M>2$ is some arbitrary absolute constant. We can then write 
\begin{equation}\label{eq:proof:thm:contrac:eq2}
\PE_\star\left[\textbf{1}_{\e}(z)\check\Pi(\bar\F_1\vert z)\right] \leq \PE_\star\left(\phi(z)\right)+ \PE_\star\left[\textbf{1}_{\e}(z)\left(1-\phi(z)\right) \check\Pi(\bar\F_1\vert z)\right].\end{equation}
Lemma \ref{test} gives
\begin{equation}\label{eq:proof:thm:contrac:eq3}
\PE_\star\left(\phi(z)\right) \leq 2(9p)^{(\bar s)}\frac{e^{-\frac{\underline{\kappa}}{32}(M\epsilon)^2}}{1-e^{-\frac{\underline{\kappa}}{32}(M\epsilon)^2}}\leq \frac{1}{p^{M^2(1+s_\star)}},\end{equation}
for all $p$ large enough. By Lemma \ref{lem:control:nc},  we have
\begin{multline*}
\check\Pi(\bar\F_1\vert z)\textbf{1}_{\e}(Z)  \leq  \left(1 +\frac{\bar\kappa(s_\star)}{\rho^2}\right)^{\frac{s_\star}{2}} \\
\times\textbf{1}_{\e}(z)  \sum_{\delta\in\Delta_{\bar s}}\frac{\omega_\delta}{\omega_{\delta_\star}}\left(\frac{\rho^2}{2\pi }\right)^{\frac{\|\delta\|_0}{2}}\int_{\F_1}\frac{q_{\delta,\theta}(z)}{q_{\delta_\star,\theta_\star}(z)} e^{-\frac{\rho^2}{2}\left(\|\theta\|_2^2-\|\theta_\star\|_2^2\right)}\mu_\delta(\rmd \theta),\end{multline*}
where $\F_1\eqdef\{\theta\in\rset^p:\; \|\theta-\theta_\star\|_2\leq M\epsilon\}$.   We have
\[\frac{q_{\delta,\theta_\star}(z)}{q_{\delta_\star,\theta_\star}(z)} = \exp\left(\frac{1}{2\lambda}\epsilon\left[W_{T(\delta_\star)}W_{T(\delta_\star)}' - W_{T(\delta)}W_{T(\delta)}'\right]\epsilon\right) \leq \exp\left(\frac{1}{2\lambda}\epsilon\left[W_{T(\delta_\star)}W_{T(\delta_\star)}'\right]\epsilon\right),\]
and for $z\in\e$, $\epsilon\left[W_{T(\delta_\star)}W_{T(\delta_\star)}'\right]\epsilon\leq \bar t\sigma_0^2\log(pq)$.
It follows from the above and Fubini's theorem that
\begin{multline}\label{eq:proof:thm:contrac:eq1}
\PE_\star\left[\textbf{1}_{\e}(z)(1-\phi(z))\Pi( \bar\F_1\vert z)\right]  \leq  e^{\frac{\sigma_0^2\bar t}{\lambda}\log(pq)}\left(1 +\frac{\bar\kappa(s_\star)}{\rho^2}\right)^{\frac{s_\star}{2}} \\
 \times \sum_{\delta\in\Delta_{\bar s}}\frac{\omega_\delta}{\omega_{\delta_\star}}\left(\frac{\rho^2}{2\pi}\right)^{\frac{\|\delta\|_0}{2}}\int_{\F_1}\PE_\star\left[\textbf{1}_{\e}(z)\left(1-\phi(z)\right) \frac{q_{\delta,\theta}(z)}{q_{\delta,\theta_\star}(z)}\right] e^{-\frac{\rho^2}{2}\left(\|\theta\|_2^2-\|\theta_\star\|_2^2\right)}\mu_\delta(\rmd \theta),\end{multline}
We write  $\F_1 = \cup_{j\geq 1} \F_{1,j}$, where $\F_{1,j}\eqdef\{\theta\in\rset^p:\; jM\epsilon < \|\theta-\theta_\star\|_2\leq (j+1)M\epsilon\}$. Using this and Lemma \ref{test}, we have
\begin{multline*}
\int_{\F_{1,j}}\PE_\star\left[\textbf{1}_{\e}(z)\left(1-\phi(z)\right) \frac{q_{\delta,\theta}(z)}{q_{\delta,\theta_\star}(z)}\right] e^{-\frac{\rho^2}{2}\left(\|\theta\|_2^2-\|\theta_\star\|_2^2\right)}\mu_\delta(\rmd \theta)\\
\leq e^{-\frac{\underline{\kappa}}{32}(jM\epsilon)^2}\int_{\F_{1,j}} e^{-\frac{\rho^2}{2}\left(\|\theta\|_2^2-\|\theta_\star\|_2^2\right)}\mu_\delta(\rmd \theta),
\end{multline*}
and
\begin{multline*}
 \int_{\F_{1,j}}e^{-\frac{\rho^2}{2}(\|\theta\|_2^2 - \|\theta_\star\|_2^2)}\mu_\delta(\rmd \theta) =  \int_{\F_{1,j}}e^{-\frac{\rho^2}{2}\left(\|\theta-\theta_\star\|_2^2 + 2\pscal{\theta_\star}{\theta-\theta_\star}\right)}\mu_\delta(\rmd \theta)\\
  \leq e^{2\rho^2\|\theta_\star\|_2(jM\epsilon)} \int_{\rset^p}e^{-\frac{\rho^2}{2}\|\theta-\theta_\star\|_2^2}\mu_\delta(\rmd\theta)\leq e^{2\rho^2\|\theta_\star\|_2(jM\epsilon)}\left(\frac{2\pi}{\rho^2}\right)^{\frac{\|\delta\|_2}{2}}. \end{multline*}
Therefore (\ref{eq:proof:thm:contrac:eq1}) becomes
\begin{multline}\label{eq:proof:thm:contrac:eq2}
\PE_\star\left[\textbf{1}_{\e_{\bar\rho}}(z)(1-\phi(z))\Pi( \bar\F_1\vert z)\right] \\
 \leq  (pq)^{\frac{\sigma_0^2\bar t}{\lambda}}\left(1 +\frac{\bar\kappa(s_\star)}{\rho^2}\right)^{\frac{s_\star}{2}} \sum_{\delta\in\Delta_{\bar s}}\frac{\omega_\delta}{\omega_{\delta_\star}}\sum_{j\geq 1}e^{-\frac{\underline{\kappa}}{32}(jM\epsilon)^2 + 2\rho^2\|\theta_\star\|_2(jM\epsilon)}\\
\leq  (pq)^{\frac{\sigma_0^2\bar t}{\lambda}}\left(1 +\frac{\bar\kappa(s_\star)}{\rho^2}\right)^{\frac{s_\star}{2}} \sum_{\delta\in\Delta_{\bar s}}\frac{\omega_\delta}{\omega_{\delta_\star}} \frac{e^{-\frac{\underline{\kappa}}{64}(M\epsilon)^2}}{1-e^{-\frac{\underline{\kappa}}{64}(M\epsilon)^2}},
 \end{multline}
 where we use the fact that for $M>128$, since $\rho^2\|\theta_\star\|_\infty\leq \bar\rho$, we have
\[-\frac{\underline{\kappa}}{64}(jM\epsilon)^2 + 2\rho^2\|\theta_\star\|_2(jM\epsilon) \leq 0.\]
We note that for $\q\leq 1/2$, and since ${p\choose s}\leq p^s$,
\begin{multline*}
\sum_{\delta\in\Delta_{\bar s}}\frac{\omega_\delta}{\omega_{\delta_\star}} = \left(\frac{1-\q}{\q}\right)^{s_\star}\sum_{\delta\in\Delta_{\bar s}}\left(\frac{\q}{1-\q}\right)^{\|\delta\|_0}\leq \left(\frac{1-\q}{\q}\right)^{s_\star} \sum_{s=0}^{\bar s}{p\choose s}(2\q)^{s} \\
\leq p^{s_\star(1+u)}\sum_{s=0}^{\bar s}(2p\q)^{s} \leq 2p^{s_\star(1+u)},
\end{multline*} 
provided that $p^u\geq 4$. It follows readily that for all $p$ large enough, and $M>u$,
\begin{equation}\label{eq:proof:thm:contrac:eq42}
\PE_\star\left[\textbf{1}_{\e_{\bar\rho}}(Z)(1-\phi(Z))\Pi( \F_1\vert Z)\right] \leq   \frac{(pq)^{\frac{\sigma_0^2\bar t}{\lambda}}}{p^{M^2(1+s_\star)}}.
  \end{equation}
The result follows by putting the pieces together. 
\vspace{-0.3cm}
\begin{flushright}
$\square$
\end{flushright}
\medskip

\bibliographystyle{ims}
\bibliography{biblio_graph,biblio_mcmc,biblio_optim,biblio_notes}
\end{document}